\begin{document}

\title{Exact Wirelength of Embedding 3-Ary \boldmath$n$-Cubes \\
                       into Certain Cylinders and Trees}

\author{S. Rajeshwari\\
School of Advanced Sciences\\
Vellore Institute of Technology\\
 Chennai-600127,  India \\
raje14697@gmail.com
\and M. Rajesh\thanks{Address for correspondence: School of Computer Science and Engineering,
                        Vellore Institute of Technology, Chennai-600127,  India.  \newline \newline
                    \vspace*{-6mm}{\scriptsize{Received April 2022; \ accepted  April  2023.}}}
   \\
School of Computer Science and Engineering \\
Vellore Institute of Technology\\
Chennai-600127,  India\\
 rajesh.m@vit.ac.in
 }

 \maketitle

\runninghead{S. Rajeshwari and M. Rajesh}{Exact Wirelength of Embedding 3-Ary $n$-Cubes into certain Cylinders...}

\vspace*{5mm}
\begin{abstract}
  Graph embeddings play a significant role in the design and analysis
of parallel algorithms. It is a mapping of the topological structure of a guest graph $G$ into a host graph $H$, which is represented as a one-to-one mapping from the vertex set of the guest graph to the vertex set of the host graph. In multiprocessing systems, the interconnection networks enhance the efficient communication between the components in the system. Obtaining minimum wirelength in embedding problems is significant in the designing of networks and simulating one architecture by another. In this paper, we determine the wirelength of embedding 3-ary $n$-cubes into cylinders and certain trees.
\end{abstract}

\begin{keywords}
embedding, edge isoperimetric problem, congestion, wirelength, 3-ary $n$-cube
\end{keywords}

\section{Introduction}

A multiprocessor is a computer network designed for parallel processing. It has numerous nodes that communicate by passing messages through a network. The pattern of connecting the nodes in a multicomputer is described as an interconnection network. By embedding a guest graph into a host graph, an already formulated algorithm for the guest graph can be modified and used in the embedded host architecture \cite{rajasingh2004embedding}. Embedding and its implications are extensively studied in \cite{guu1997circular,DBLP:journals/amc/Yang09,DBLP:journals/dam/Bezrukov01,rajasingh2004embedding,DBLP:journals/dam/ManuelARR11}. Embedding has vast applications in the complex connection networks such as network compression \cite{feder1995clique}, visualization \cite{jungnickel2005graphs}, clustering \cite{white2005spectral},
link prediction \cite{liben2007link} and node classification \cite{bhagat2011node}. The efficiency of a graph embedding is determined by the optimal wirelength of the layout. The wirelength of a graph embedding originate from VLSI designs, data structures, networks that deal with parallel computing systems, biological models, structural engineering and so on \cite{xu2013topological}. The implementation of 100 billion transistors in a Chip Multi-processor (CMP) has become a reality as microprocessor technology advances into the nanoscale stage \cite{bjerregaard2006survey}. The chip architecture must consider how to efficiently use a high number of transistors. The complexity of chip design is also rising, making it increasingly challenging on improving the overall performance of the system by enhancing the performance of a single processing core. Due to the key benefits of network-on-chip (NoC) such as high integration, low power consumption, cheap cost and compact volume, it has become a widely used approach to designing very large-scale integration (VLSI) systems \cite{xiang2015multicast,benini2002networks}. Various NoC is analysed for effective communication in CMP \cite{chittamuru2018bignoc,gu2017mronoc,gu2017time,yang2018taonoc}. The topology structure must meet a few unique requirements for NoC, due to the area restriction on processors, interconnection network and overall wirelength of NoC has emerged as the most pressing problem of its effective communication. It is a secondary factor for NoC to take into account when calculating the cost of their interconnection networks. The cost of wiring for connectivity increases, with network complexity. Consequently, it is preferable to replace NoC with a conventional network for the complex networks serving as a counterpart, where the embedding problem becomes a key feature in analysing NoC performance.
The $k$-ary $n$-cube is a parallel architecture used in implementation and message latency \cite{bose1995lee,yang2015note,gu20143}. This architecture is the hypercube when $k=2$ and the torus when $k=3$. Hypercubes have been used in Ipsc/2 and Ipsc/860 and tori in J-Machine, Cray T3D and T3E  \cite{seitz1989submicron}. The topological properties of $k$-ary $n$-cubes have been explored in \cite{ashir1997embeddings, ashir2002fault}. Due to the advantageous topological properties of $3$-ary $n$-cube, $Q^{3}_{n}$ such as symmetricity, pancyclicity, short message latency and easy implementation it has been utilised to build multicomputers such as the Cray XT5, Blue Gene/L supercomputers \cite{bauer2009scalable} and CamCube \cite{abu2010symbiotic} systems. Embedding problem on $3$-ary $n$-cubes is extensively studied on paths, cycles with faulty nodes and links \cite{dong2010embedding,lv2018hamiltonian}. Further $3$-ary $n$-cubes have been embedded into paths, grids \cite{fan2019optimally} and 3D Torus \cite{fan2022communication}. Fan et al. \cite{fan2020reconfigurable} had studied the fault tolerance of $3$-ary $n$-cubes and embedding of the same into torus NoC. In this paper, the optimal wirelength is computed for embedding 3-ary $n$-cubes into certain cylinders and certain trees such as caterpillars, firecracker graphs and banana trees, which enables the efficient communication of 3-ary $n$-cubes onto the above-mentioned network-on-chip.

\section{Preliminaries}

This section consists of the preliminary work required for our subsequent work.
\begin{definition}\cite{bezrukov2000edge}
The edge isoperimetric problem is to find a subset of vertices in a given graph that induces the maximum number of edges among all subsets with the same number of vertices. In otherwords, for a given $r$, $1 \leq r \leq |V_G|$, the problem is to find $I_G(r)= \max_{A\subseteq V, |A|=r}|I_G(A)|$, where $I_G(A)=\{(u, v)\in E: u, v \in A\}$.
\end{definition}

\begin{definition}\cite{bezrukov1998}
Embedding of graph $G$ into graph $H$ is a one-to-one mapping $f$ : $V(G)\rightarrow V(H)$ such that $f$ induces a one-to-one mapping $P_f$ : $E(G) \rightarrow \{ P_f(u,v) : P_f(u,v) $ is a path in $H$ between $f(u)$ and $f(v)$, for every edge $(u,v)$ in $G$\}.
\end{definition}

\begin{definition}\cite{bezrukov1998}
For an edge $e\in E(H)$, let $c_{f}(e)$ denote the number of
edges $(u,v)$ of $G$ such that $e$ is in the path $P_{f} (u,v)$ between
vertices $f(u)$ and $f(v)$ in $H$. The wirelength of an embedding
$f$ of $G$ into $H$ is given by $WL_{f} (G,H) =\sum_{e \in E(H)}
c_{f}(e)$. The wirelength of embedding $G$
into $H$ is defined as
$WL(G,H) = min\{WL_{f} (G,H) : f$ is an embedding
from $G$ to $H\}$.
\end{definition}

\begin{remark}
For any set $S$ of edges of $H$, $c_{f}(S)=\sum_{e \in S} c_{f}(e)$.
\end{remark}

\begin{remark}
$\sum_{v\in V(G_i)}deg_{G}(v)$ denotes the sum of degree of all vertices in $G_i$, where $deg_{G}(v)$ is the number of edges incident on a vertex $v$.
\end{remark}

\begin{lemma}(\cite{miller2015minimum}, Congestion Lemma)\label{congestion lemma}
Let $f$ be an embedding of an arbitrary graph $G$ into $H$. Let $S$ be an edge cut of $H$ such that the removal of edges of $S$ separates $H$ into two components $H_1$ and $H_2$ and let $G_1=f^{-1} (H_1 )$ and $G_2=f^{-1} (H_2 )$. Also $S$ satisfies the following conditions:
\begin{enumerate}
   \item For every edge  $(a,b)\in G_i, i= 1,2, P_f(a,b)$ has no edges in $S$.
   \item For every edge $(a,b)$ in G with  $a \in G_1$ and $b \in G_2$, $P_f(a,b)$ has exactly one edge in $S$.
   \item 	$G_1$ and $G_2$ are maximum subgraphs.
	\end{enumerate}
Then, $c_f(S)=\sum_{v\in V(G_1)}deg_{G}(v)-2|E(G_1)| = \sum_{v\in V(G_2)}deg_{G}(v)-2|E(G_2)|$ and $c_f(S)$ is minimum.
\end{lemma}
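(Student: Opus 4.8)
The plan is to compute $c_f(S)$ directly from its definition as a sum over edges of $G$, and then invoke the structural hypotheses to convert it into the stated degree-sum formula. First I would expand $c_f(S) = \sum_{e \in S} c_f(e)$ by counting, for each edge $e \in S$, the edges $(a,b)$ of $G$ whose path $P_f(a,b)$ traverses $e$; equivalently, this double sum counts incidences between edges of $G$ and edges of $S$ lying on a common path. The two cut hypotheses collapse this sum immediately: by condition~1 an edge with both endpoints in the same $G_i$ contributes nothing, since its path avoids $S$ entirely, while by condition~2 an edge with one endpoint in $G_1$ and the other in $G_2$ contributes exactly $1$, since its path meets $S$ in precisely one edge. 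Hence $c_f(S)$ equals exactly the number of edges of $G$ having one endpoint in $V(G_1)$ and the other in $V(G_2)$, i.e. the size of the corresponding edge cut in $G$.

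Next I would translate this cut size into the degree-sum expression by a handshaking argument on $G_1$. Summing $deg_G(v)$ over $v \in V(G_1)$ counts each edge internal to $G_1$ twice and each crossing edge once, so $\sum_{v \in V(G_1)} deg_G(v) = 2|E(G_1)| + c_f(S)$, which rearranges to $c_f(S) = \sum_{v \in V(G_1)} deg_G(v) - 2|E(G_1)|$. The identical computation applied to $G_2$ yields the second equality, and the two sides agree because both count the same set of crossing edges.

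For the minimality claim I would argue through the edge isoperimetric quantity $I_G$. Any embedding $f'$ for which $S$ still separates $H$ into $H_1$ and $H_2$ induces a partition of $V(G)$ into $V(G_1')$ and $V(G_2')$, and the same counting gives $c_{f'}(S) = |E(G)| - |E(G_1')| - |E(G_2')|$. Because removing $S$ fixes the sizes $|V(H_1)|$ and $|V(H_2)|$ and $f'$ is a bijection, the part sizes $r = |V(G_1')|$ and $|V(G)| - r$ are forced. Thus minimizing the congestion across $S$ is equivalent to maximizing $|E(G_1')| + |E(G_2')|$ subject to these fixed part sizes, and this is bounded by $I_G(r) + I_G(|V(G)|-r)$, attained exactly when both $G_1'$ and $G_2'$ are maximum subgraphs. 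Condition~3 asserts that $G_1$ and $G_2$ realise this optimum, so $c_f(S)$ attains the minimum.

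The \emph{main obstacle} I anticipate is precisely this minimality step: one must be careful that the two parts can be maximum subgraphs \emph{simultaneously}, and that fixing the vertex set of one part forces that of the other, so that the single isoperimetric bound genuinely controls the whole cut rather than each side in isolation. The counting identities of the first two paragraphs are routine handshaking, whereas the real content lies in reconciling condition~3 with the isoperimetric optimum and confirming that no competing embedding can route fewer crossing edges through $S$.
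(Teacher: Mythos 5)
The paper offers no proof of this lemma at all: it is quoted verbatim from \cite{miller2015minimum}, so your attempt can only be compared with the standard argument in that cited source, which is exactly what you reproduce --- conditions 1 and 2 identify $c_f(S)$ with the number of edges of $G$ crossing the partition $(V(G_1),V(G_2))$, handshaking converts that count to the degree-sum formula, and condition 3 together with the edge-isoperimetric quantity $I_G$ yields minimality. One imprecision in your third paragraph: for an arbitrary embedding $f'$, conditions 1 and 2 need not hold (a path for an internal edge may cross $S$ and return, and a path for a crossing edge may use several edges of $S$), so the ``same counting'' gives only the inequality $c_{f'}(S) \geq |E(G)| - |E(G_1')| - |E(G_2')|$, not equality; since any $H_1$--$H_2$ path must use at least one edge of the cut $S$ and internal edges contribute nonnegatively, this inequality is immediate, and it is exactly the direction your minimality argument needs, so the proof goes through after rewording. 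Finally, the ``main obstacle'' you flag --- whether $G_1$ and $G_2$ can be maximum simultaneously --- is not something the lemma asks you to establish: condition 3 is a hypothesis, and the paper's remark immediately after the lemma disposes of the simultaneity issue in the cases used here, noting that when $G$ is regular (as $Q_n^3$ is) maximality of $G_1$ already forces maximality of $G_2$.
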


\begin{remark}
    In Lemma \ref{congestion lemma}, if $G$ is a regular graph then $G_1$ is a maximum subgraph of $G$ implies that $G_2$ is also a maximum subgraph of $G$.
\end{remark}

\begin{lemma}(\cite{DBLP:journals/ipl/ArockiarajMRR11}, $k$-Partition Lemma)
Let $f:G \rightarrow H$ be an embedding. Let $[kE(H)]$ denote a multiset of edges of $H$ with each edge in $H$ repeated exactly $k$ times. Let ${S_1,S_2,...,S_r}$  be a partition of $[kE(H)]$ such that each $S_i$ is an edge cut of $H$ satisfying the Congestion Lemma. Then \[WL_f(G,H)= \frac{1}{k} \sum\limits_{i=1}^r c_f(S_i).\]
\end{lemma}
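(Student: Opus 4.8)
The plan is to prove the identity by a straightforward double-counting of the quantity $c_f(e)$ over the multiset $[kE(H)]$, exploiting the additivity recorded in the Remark after the definition of wirelength. The central observation is that the two sides of the claimed equality are, up to the factor $\frac{1}{k}$, two different ways of evaluating the single number $\sum_{e \in [kE(H)]} c_f(e)$: once by grouping the copies according to which underlying edge of $H$ they came from, and once by grouping them according to which cut $S_i$ of the partition they fall into.

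First I would start from the definition $WL_f(G,H) = \sum_{e \in E(H)} c_f(e)$ and observe that, since $[kE(H)]$ contains exactly $k$ copies of each edge $e \in E(H)$, summing $c_f$ over the whole multiset gives
\[
\sum_{e \in [kE(H)]} c_f(e) = k \sum_{e \in E(H)} c_f(e) = k\, WL_f(G,H).
\]
Next I would invoke the hypothesis that $\{S_1,\dots,S_r\}$ is a partition of the multiset $[kE(H)]$. By the Remark, $c_f(S_i) = \sum_{e \in S_i} c_f(e)$ for each $i$, and because the $S_i$ together account for every copy of every edge exactly once, summing over $i$ reassembles the total over the full multiset:
\[
\sum_{i=1}^r c_f(S_i) = \sum_{i=1}^r \sum_{e \in S_i} c_f(e) = \sum_{e \in [kE(H)]} c_f(e).
\]
Combining the two displays yields $\sum_{i=1}^r c_f(S_i) = k\, WL_f(G,H)$, and dividing by $k$ gives the assertion.

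I do not expect a genuine obstacle here: the equality is purely a rearrangement of a finite sum and in fact holds for \emph{any} partition of $[kE(H)]$ into edge sets. The one point to state carefully is that the hypothesis that each $S_i$ is an edge cut satisfying the Congestion Lemma is not needed for the equality itself; rather, it is what makes the lemma useful, since the Congestion Lemma then guarantees that each $c_f(S_i)$ equals a quantity determined by a maximum subgraph of $G$ and is therefore minimized independently of the choice of $f$. Consequently the same computation converts the identity into a lower bound on $WL(G,H)$ over all embeddings, which is presumably how it will be applied in the subsequent wirelength computations.
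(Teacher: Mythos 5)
Your proof is correct. The paper itself states this lemma without proof, citing its source, and your double-counting argument --- summing $c_f$ over the multiset $[kE(H)]$ once by underlying edge and once by the parts $S_i$, using the Remark $c_f(S_i)=\sum_{e\in S_i}c_f(e)$ --- is precisely the standard argument behind the cited result; your observation that the edge-cut/Congestion-Lemma hypotheses are irrelevant to the identity itself and serve only to make each $c_f(S_i)$ individually minimal (hence turning the identity into a lower bound on $WL(G,H)$) is also exactly how the lemma is used later in the paper.
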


\section{\boldmath $3$-Ary $n$-cube, $Q_n^3$}

\begin{definition}\cite{hsieh2007panconnectivity}
The 3-ary $n$-cube, $Q^{3}_{n}\ (n\geq1)$ is defined to be a graph on $3^n$ vertices, each of the form $x=(x_{n-1},x_{n-2},...,x_{0})$, where $0\leq x_{i}\leq 2$ for $0\leq i\leq n-1$. Two vertices are joined by an edge if and only if there exists $j$, $0\leq j\leq n-1$, such that $x_{j}=y_{j}\pm 1$ $(mod \ 3)$ and $x_{i}=y_{i}$, for every $i\in \{0,1,...,j-1,j+1,...,n-1\}$.
\end{definition}
It is also recursively defined as the cartesian product of $n$ cycles of order 3,
\[Q^{3}_{n}=C_{3} \otimes C_{3}\otimes...\otimes C_{3} (n \ \text{times}).\]
Thus,
\begin{equation*}
  Q_n^3=\begin{cases}
    C_3, & \text{if $n=1$}.\\
    C_{3}\otimes Q_{n-1}^3, & \text{otherwise}.
  \end{cases}
\end{equation*}

\noindent Each $Q_n^3$ contains three copies of $Q_{n-1}^3$ as subgraphs. Recursively each $Q_{n-1}^3$ has three copies of $Q_{n-2}^3$ as subgraphs. Thus we can partition $Q_n^3$ into 3 disjoint isomorphic copies $Q_{n-1}^3(0)$, $Q_{n-1}^3(1)$, $Q_{n-1}^3(2)$, where $Q_{n-1}^3(k)$, $\forall \ 0\leq k \leq2$ denotes the subgraph induced by the vertices $\{(x=x_{n-1},x_{n-2},...,x_{i},..,x_{0}) \in V(Q_n^3)|x_{i}=i\}$, for any $i=0,1,2$. Each $Q_{n-1}^3(k)$  is a convex set of $Q_n^3$. $Q_{n}^3$ has $k^{n-1}$ edges, having a perfect matching between $Q_{n-1}^3(k)$ and $Q_{n-1}^3(k+1)$, $\forall \ 0\leq k \leq2$. $Q_{n-1}^3(k)$ and $Q_{n-1}^3(k+1)$ are adjacent subcubes, and the edges between them are called `bridges'. The $n$ dimensional $Q_n^3$ is $2n$-regular\cite{hsieh2007panconnectivity}. See Figure 1.

\begin{figure}[!h]
\vspace*{2mm}
\centering
 \includegraphics[width=0.6\textwidth]{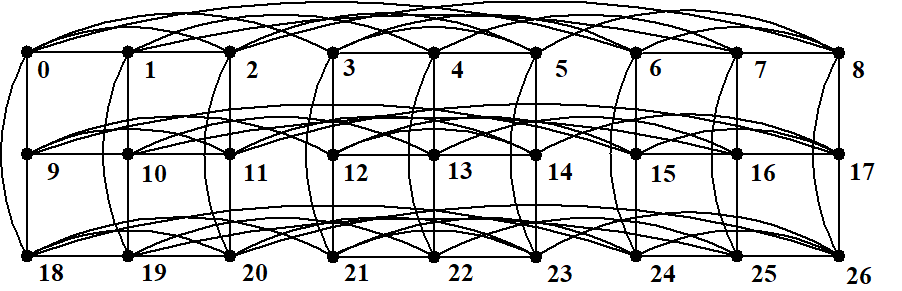}
 \caption{3-ary 3-cube, $Q_3^{3}$.}
 \label{Figure:1}\vspace*{-4mm}
\end{figure}

\begin{definition}\cite{DBLP:journals/tcs/BezrukovBK18}
The Lexicographic order on a set of $n$-tuples with integer entries is defined as follows: We say that $(x_1,...,x_n)$ is greater than $(y_1,...,y_n)$ if there exist an index $i$, $1\leq i \leq n$, such that $ x_j = y_j$ for $1 \leq j < i$  and $x_i > y_i$.
\end{definition}

\noindent Sergei et al.\cite{DBLP:journals/tcs/BezrukovBK18} has studied the edge isoperimetric problem for the torus $C_3 \times C_3$ which was solved in \cite{lindsey1964assignment, harper1967necessary} by introducing a new characteristic, called $\delta-${\it sequence} which is defined as follows: For a graph $G =(V, E)$ with $1\leq k \leq |V|$, we define
\begin{center}
 $\delta(k) = I(k)-I(k-1)$, with $\delta(1) = 0$,\\
 where $I(k)$ is the maximum number of edges induced by any $k$ vertices.\\
\end{center}
Further $\delta_G = (\delta(1), \delta(2)....,\delta(|V |))$ is called the $\delta-$sequence of $G$. The $\delta-${\it sequence} of $C_3 \times C_3$ is (0,1,2,1,2,3,2,3,4). This gives an optimal order for the maximum subgraph for $C_3 \times C_3$ by lexicographic ordering.

\begin{theorem}\cite{DBLP:journals/ejc/AhlswedeC97a}\label{AhlswedeC97a}
If the cartesian product of $G \times G$ is optimal with vertices of lexicographic ordering then it is optimal for $G^n$ for any $n \geq 3$.
 \end{theorem}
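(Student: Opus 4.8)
The plan is to prove this as a \emph{local--global principle}. I read the hypothesis ``$G\times G$ is optimal with lexicographic ordering'' as the statement that for every $r$ the first $r$ vertices of $V(G^2)$ in lexicographic order induce exactly $I_{G^2}(r)$ edges, so that the edge-isoperimetric optimizers of $G^2$ are the \emph{nested} initial segments of lex order. I would then show, with a single compression argument and \emph{no} induction on $n$, that the same nestedness persists in $G^n$ for all $n\ge 3$, the whole weight of the proof resting on the two-coordinate hypothesis. The engine is a family of \emph{two-coordinate compression operators}: for each pair of coordinates $(i,j)$ and each $A\subseteq V(G^n)$, define $C_{ij}(A)$ by fixing the remaining $n-2$ coordinates to an arbitrary value $\bar x$, looking at the planar slice $A_{\bar x}\subseteq V(G^2)$ spanned by directions $i$ and $j$, and replacing $A_{\bar x}$ by the lexicographic initial segment of $V(G^2)$ of the same cardinality. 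Plainly $|C_{ij}(A)|=|A|$.

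The first step is to show compression never decreases the induced edge count, $|I_{G^n}(C_{ij}(A))|\ge |I_{G^n}(A)|$. Since every edge of the Cartesian product $G^n$ changes exactly one coordinate, I would split $E(G^n)$ into edges in directions $i$ or $j$ and edges in some direction $k\notin\{i,j\}$. Each direction-$i$ or direction-$j$ edge lies inside a single slice $A_{\bar x}$, which induces a copy of $G^2$; that slice contributes $|I_{G^2}(A_{\bar x})|$ edges, and replacing $A_{\bar x}$ by the equinumerous lexicographic segment can only increase this, by the $G^2$ hypothesis. Summing over $\bar x$ handles the in-plane edges. For a direction-$k$ edge, the number of such edges between two adjacent $x_k$-slices equals $|A_{\bar x}\cap A_{\bar x'}|$, the size of the intersection of the two planar sets inside $V(G^2)$. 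Because lexicographic initial segments are nested, after compression this intersection becomes the \emph{smaller} of the two segments, whose cardinality is at least that of the original intersection; hence direction-$k$ edges cannot decrease either. Combining the two cases gives the monotonicity claim.

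Iterating the operators $C_{ij}$ over all coordinate pairs, using a strictly lex-monotone weight on vertices as a bounded potential to guarantee termination, drives any optimizer $A$ to a \emph{stable} set $B$ with $|B|=|A|$ and $|I_{G^n}(B)|\ge|I_{G^n}(A)|$ that is fixed by every $C_{ij}$. The crux of the argument, and the step I expect to be the main obstacle, is to show that such a fully compressed $B$ realizes the same edge count as the lexicographic initial segment of its size. Stability under all two-coordinate compressions forces strong monotone structure on $B$ and on the sequence of slice sizes, but stabilization by itself need not isolate a unique set: one must still rule out the surviving non-lexicographic stable configurations by an exchange (or discrete-continuity) argument that exploits the concrete shape of the $\delta$-sequence of $G^2$ together with the fact that the slice sizes must themselves be arranged lexicographically. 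Once that is settled, $I_{G^n}(r)$ is attained by the lexicographic initial segment for every $r$, which is exactly optimality of $G^n$ under lexicographic ordering; the base case $n=2$ being the hypothesis, the conclusion holds for all $n\ge 3$.
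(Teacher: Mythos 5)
Your monotonicity half is correct and is the standard opening move for results of this type: in-plane edges are handled by the $G^2$ hypothesis slice by slice, cross edges in a direction $k\notin\{i,j\}$ are counted by $|A_{\bar x}\cap A_{\bar x'}|$, which after compression becomes $\min(|A_{\bar x}|,|A_{\bar x'}|)$ by nestedness of initial segments, and a lex-rank potential gives termination. But the step you flag as ``the main obstacle'' is a genuine gap, and it cannot be closed by a soft exchange argument: a set stable under all two-coordinate compressions need not realize the lexicographic edge count, nor even be close to it in structure. Concretely, take $G=K_2$, so $G^2=Q_2=C_4$ satisfies your hypothesis (lex segments of sizes $1,2,3$ induce $0,1,2$ edges, which is optimal) and $G^n$ is the hypercube. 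In $Q_3$ the set $B=\{000,001,010,100\}$ is fixed by every operator $C_{ij}$ --- each of its three two-dimensional slices in each pair of directions is a lexicographic initial segment of $Q_2$ (namely $\{00,01,10\}$ or $\{00\}$) --- yet $B$ induces only $3$ edges, while the lexicographic segment $\{000,001,010,011\}$ of the same cardinality induces $4$. So the class of fully compressed sets is strictly larger than the class of lex segments and contains strictly suboptimal members; your iteration only shows that \emph{some} optimizer is stable, and the entire content of the theorem is the remaining comparison of stable sets with lex segments. Already in this simplest case $G=K_2$ that comparison is Harper's edge-isoperimetric theorem, whose proofs need a careful case analysis of the exceptional stable configurations (or an induction on $n$); for general $G$ assuming only the two-coordinate hypothesis, it is the heart of Ahlswede and Cai's argument, not a verification one can defer.

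For calibration: the paper does not prove this statement at all --- Theorem 3.3 is quoted from \cite{DBLP:journals/ejc/AhlswedeC97a} --- so the only fair comparison is with that source, where the local--global principle is established by a substantially more delicate analysis of lexicographic structure than compression plus termination. As written, your proposal proves the (correct, useful) reduction ``every optimizer may be assumed fully compressed,'' states accurately what remains, and then stops; the deferred ``exchange (or discrete-continuity) argument that exploits the $\delta$-sequence of $G^2$'' is precisely the theorem, and the hypercube example above shows why no argument at the level of generality you sketch can succeed without engaging the fine structure of stable non-lex sets.
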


 \noindent
 The following corollary of Theorem 3.3 solves the edge isoperimetric problem in $Q_{n}^{3}$, $n\geq 2$.

\begin{corollary}\label{lex}
The Lexicographic ordering of vertices of $Q_{n}^{3}$, $n\geq 2$, is an optimal ordering for inducing maximum subgraphs in $Q_{n}^{3}$.
\end{corollary}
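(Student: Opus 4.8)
The plan is to apply Theorem~\ref{AhlswedeC97a} directly with the choice $G = C_3$, after disposing of the base case separately. Since $Q_n^3$ is defined recursively as the cartesian product $C_3 \otimes C_3 \otimes \cdots \otimes C_3$ of $n$ copies of $C_3$, we have $Q_n^3 = C_3^n$ in the notation of that theorem, so the corollary is precisely the assertion that lexicographic ordering is optimal for $C_3^n$.

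First I would handle $n = 2$. Here $Q_2^3 = C_3 \otimes C_3 = C_3 \times C_3$, and optimality of the lexicographic ordering for this torus is exactly the result recorded just above the theorem: the $\delta$-sequence of $C_3 \times C_3$ is $(0,1,2,1,2,3,2,3,4)$, and this sequence certifies that taking the first $k$ vertices in lexicographic order induces a maximum subgraph for every $k$ with $1 \leq k \leq 9$. This simultaneously establishes the base case and verifies the hypothesis of Theorem~\ref{AhlswedeC97a} for $G = C_3$.

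For $n \geq 3$ I would simply invoke Theorem~\ref{AhlswedeC97a}. Its hypothesis --- that the two-fold product $G \times G$ is optimal under lexicographic ordering --- is met with $G = C_3$ by the previous step, so its conclusion yields that $G^n = C_3^n = Q_n^3$ is optimal under lexicographic ordering for all $n \geq 3$. Combining the two cases gives the corollary for every $n \geq 2$.

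The one point requiring care --- and which I regard as the main, if modest, obstacle --- is confirming that the hypothesis of Theorem~\ref{AhlswedeC97a} is genuinely satisfied, namely that lexicographic ordering is optimal for $C_3 \times C_3$ itself and not merely for the single factor $C_3$. This is not automatic; it rests on the solution of the edge-isoperimetric problem for $C_3 \times C_3$ encoded by the $\delta$-sequence above. Once that is in hand, the passage to arbitrary $n \geq 3$ is an immediate consequence of the theorem and demands no further combinatorial work, which is what makes the result a genuine corollary rather than a fresh argument.
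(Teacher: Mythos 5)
Your proposal is correct and follows exactly the route the paper intends: the optimality of lexicographic ordering for $C_3\times C_3$ (certified by the $\delta$-sequence $(0,1,2,1,2,3,2,3,4)$ from the cited literature) supplies both the $n=2$ case and the hypothesis of Theorem~\ref{AhlswedeC97a}, which then yields all $n\geq 3$. You are also right to flag that the theorem's hypothesis concerns the two-fold product $C_3\times C_3$ rather than $C_3$ alone --- this is precisely why the paper develops the $\delta$-sequence discussion immediately before stating the corollary.
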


\begin{remark}
Let $lex_{k}=\{0,1,2,...,k-1\}$, $1\leq k\leq 3^{n}$ denote the first $k$ vertices in $Q_{n}^{3}$, $n\geq 2$ with lexicographic ordering.
\end{remark}

\begin{theorem}
If $G$ is a 3-ary $n$-cube, $Q_{n}^{3}$, $n\geq2$, then $I_G(k) = k_{1}3^{k_1}+(k_{2}+1)3^{k_2}+(k_{3}+2)3^{k_3}+...+(k_{r}+(r-1))3^{k_r}$, $k_i=0,1,2,...,n, \ 1\leq i\leq r$;
where $I_G(k)$ is the number of edges induced in any maximum subgraph on $k$ vertices and $k = 3^{k_1}+3^{k_2}+3^{k_3}+...+3^{k_r}$, $k_1\geq k_2\geq k_3\geq ....\geq k_r$.
\end{theorem}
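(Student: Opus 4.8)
The plan is to prove the formula by induction on $n$, after first using Corollary \ref{lex} to replace the extremal quantity $I_G(k)$ by the concrete count of edges induced by the initial segment $lex_k = \{0,1,\dots,k-1\}$ of the lexicographic order. Since lex ordering is optimal, $I_G(k)$ equals the number of edges of $Q_n^3$ with both endpoints in $lex_k$, so it suffices to count these edges. For the induction I fix the base-$3$ expansion of $k$: writing the digits of $k$ in base $3$ and repeating each exponent as often as its digit indicates produces the non-increasing sequence $k_1 \ge k_2 \ge \cdots \ge k_r$ with $k = \sum_{i=1}^r 3^{k_i}$, where each exponent occurs at most twice. The leading base-$3$ digit $d_{n-1} \in \{0,1,2\}$ then records how many of the $k_i$ equal $n-1$, and this is exactly what will drive the case analysis.

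The core of the argument is the recursive structure of $Q_n^3$ recalled above: $Q_n^3$ splits into three disjoint copies $Q_{n-1}^3(0),Q_{n-1}^3(1),Q_{n-1}^3(2)$ joined cyclically by three perfect matchings (the bridges $0$–$1$, $1$–$2$, $2$–$0$), and under lexicographic order the initial segment $lex_k$ fills $Q_{n-1}^3(0)$ first, then $Q_{n-1}^3(1)$, then $Q_{n-1}^3(2)$, restricting on each copy to that copy's own lexicographic initial segment. This yields three cases according to $d_{n-1}$. If $d_{n-1}=0$ (so $k\le 3^{n-1}$) all vertices lie in one copy and $I_G(k)=I_{Q_{n-1}^3}(k)$. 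If $d_{n-1}=1$, I count the $(n-1)3^{n-1}$ edges inside the full copy $Q_{n-1}^3(0)$, the $I_{Q_{n-1}^3}(k')$ edges inside the partially filled $Q_{n-1}^3(1)$ where $k'=k-3^{n-1}$, and the $k'$ bridge edges contributed because each placed vertex of copy $1$ is matched to an occupied vertex of the full copy $0$, giving $I_G(k) = (n-1)3^{n-1} + I_{Q_{n-1}^3}(k') + k'$. If $d_{n-1}=2$, copies $0$ and $1$ are full and copy $2$ holds the first $k''=k-2\cdot 3^{n-1}$ vertices; here the within-copy edges total $2(n-1)3^{n-1}+I_{Q_{n-1}^3}(k'')$, the full $0$–$1$ matching contributes $3^{n-1}$, and each of the $k''$ vertices of copy $2$ meets occupied matches in both copies $0$ and $1$, adding $2k''$, so $I_G(k) = (2n-1)3^{n-1} + I_{Q_{n-1}^3}(k'') + 2k''$.

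It then remains to check that the claimed closed form $F(k)=\sum_{i=1}^r (k_i+(i-1))3^{k_i}$ obeys these recurrences, using the inductive hypothesis $I_{Q_{n-1}^3}=F$ on the smaller cube. The key observation is that deleting the leading term(s) equal to $3^{n-1}$ turns the expansion of $k$ into that of $k'$ (respectively $k''$), but shifts the ordinal weight of every surviving term from $(i-1)$ down by one (respectively two). Hence $F(k)-F(k')$ telescopes to $(n-1)3^{n-1} + \sum_{i\ge 2}3^{k_i} = (n-1)3^{n-1}+k'$ in the digit-$1$ case, and $F(k)-F(k'')$ to $(2n-1)3^{n-1} + 2\sum_{i\ge 3}3^{k_i} = (2n-1)3^{n-1}+2k''$ in the digit-$2$ case, matching the two recurrences term for term; the digit-$0$ case is immediate. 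The base case $n=2$ is the torus $C_3\times C_3$, whose values are read off from the given $\delta$-sequence $(0,1,2,1,2,3,2,3,4)$ and checked against $F$ directly (and $n=1$, $C_3$, if one wants to start the induction lower).

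I expect the main obstacle to be bookkeeping rather than conceptual: one must argue carefully that on each subcube the restriction of the global lexicographic order is again the lexicographic order (so that the inductive hypothesis genuinely applies), count the bridge edges without double-counting across the three matchings, and track the index shift in the ordinal weights $(i-1)$ when the leading exponents are stripped off. A secondary point worth a remark is that the value $F(k)$ is independent of which non-increasing power-of-$3$ decomposition of $k$ is used, so fixing the base-$3$ expansion loses no generality.
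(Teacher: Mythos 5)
Your proposal is correct, but it takes a genuinely different route from the paper. The paper's proof is a direct, one-shot count with no induction at all: it observes that the lexicographic initial segment on $k=3^{k_1}+3^{k_2}+\cdots+3^{k_r}$ vertices decomposes into pairwise adjacent subcubes $Q_{k_1}^{3},\dots,Q_{k_r}^{3}$, that each $Q_{k_i}^{3}$ (being $2k_i$-regular on $3^{k_i}$ vertices) contributes $k_i3^{k_i}$ internal edges, and that a matching of $3^{k_i}$ edges joins $Q_{k_i}^{3}$ to each of the $i-1$ earlier subcubes, so that $Q_{k_i}^{3}$ contributes $(k_i+(i-1))3^{k_i}$ in total; summing over $i$ yields the formula immediately. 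You instead peel off one dimension at a time: the three-copy splitting of $Q_n^3$, a recurrence in each of the three leading-digit cases ($I_G(k)=(n-1)3^{n-1}+I_{Q_{n-1}^3}(k')+k'$ and $(2n-1)3^{n-1}+I_{Q_{n-1}^3}(k'')+2k''$, both of which are correct), and a telescoping check that the closed form satisfies these recurrences, anchored in the $\delta$-sequence of $C_3\times C_3$. Your version is more rigorous precisely where the paper is terse: the paper asserts without justification the key adjacency fact that exactly $3^{k_i}$ edges run between $Q_{k_i}^{3}$ and each earlier $Q_{k_j}^{3}$ (one must verify each vertex of the later subcube has exactly one neighbour in each earlier one), and it never explicitly invokes Corollary \ref{lex} to identify the lex-segment edge count with the extremal quantity $I_G(k)$ — a step you state up front; your explicit base case and your remark that the value is independent of the chosen non-increasing decomposition also close small gaps the paper leaves open. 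What the paper's route buys is brevity and transparency: each summand $(k_i+(i-1))3^{k_i}$ is visibly ``internal edges plus bridges to earlier subcubes,'' a structural meaning your telescoping argument recovers only implicitly, and it handles an arbitrary non-increasing decomposition directly rather than fixing the base-$3$ expansion as you do.
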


\begin{proof}
Consider $Q_{n}^{3}(k)$ where  $k = 3^{k_1}+3^{k_2}+3^{k_3}+...+3^{k_r}$. $Q_{n}^{3}(k)$ contains $Q_{k_1}^{3}$, $Q_{k_2}^{3}$,...,$Q_{k_r}^{3}$ where $Q_{k_i}^{3}, \forall\  i>1 $ is adjacent with $Q_{k_{i-1}}^{3},...,Q_{k_2}^{3},Q_{k_1}^{3}$. There are $3^{k_i}$ edges between $Q_{k_i}^{3}$ and each of $Q_{k_j}^{3}$, $\forall \ j=1,2,...,i-1$. Thus there exist $(i-1)3^{k_i}$ edges between $Q_{k_i}^{3}$ and $Q_{k_j}^{3}$, $\forall \ j=1,2,...,i-1$. Further $Q_{k_i}^{3}$, $\forall \ i=1,2,3,...,r$ also has $k_i3^{k_i}$ edges in it. This implies that $Q_{k_i}^{3}$ contributes $(k_{i}3^{k_i}+(i-1)3^{k_i})=(k_{i}+(i-1))3^{k_i}$ edges to $I_{Q_{n}^{3}}(k)$. Hence the Lemma.
\end{proof}

\section{Embedding of \boldmath {$Q_n^3$} into cylinder \boldmath $C_{3}\times P_{3^{n-1}}$}

\begin{definition}\cite{DBLP:journals/dam/ManuelARR11}
Let $P_{\alpha}$ and $C_{\alpha}$ denote a path and cycle on $\alpha$ vertices respectively. The 2-dimensional grid is defined as $P_{\alpha_1}\times P_{\alpha_2}$, where $\alpha_i\geq2$ is an integer for each $i = 1, 2$. The cylinder $C_{\alpha_{1}}\times P_{\alpha_{2}}$, where $\alpha_{1}, \alpha_{2} \geq 3$ is a $P_{\alpha_1}\times P_{\alpha_2}$ grid with a wraparound edge in each column.
\end{definition}

\noindent \textbf{Lexicographic ordered embedding:} The lexicographic ordered embedding $lex:Q_{n}^{3}\rightarrow C_{3}\times P_{3^{n-1}}$ with labels 0 to $3^{n}-1$ is an assignment of labels to the vertices of $Q_{n}^{3}$ in lexicographic order and the vertices of $C_{3}\times P_{3^{n-1}}$ as follows: Vertices in $r^{th}$ column are labeled as $3(r-1)+0,\ 3(r-1)+1,\ 3(r-1)+2$ from top to bottom, where $r = 1,2,..,3^{n-1}$. \medskip\\
\textbf{Embedding Algorithm A:}\\ 
{\it Input:} The 3-ary $n$-cube, $Q_n^3$ and the cylinder $C_{3}\times P_{3^{n-1}}$ on $3^n$ vertices.\\
{\it Algorithm:} Lexicographic ordered embedding of $Q_n^3$ into $C_{3}\times P_{3^{n-1}}$.\\
{\it Output:} The embedding $lex$ of 3-ary $n$-cube, $Q_n^3$ into cylinder $C_{3}\times P_{3^{n-1}}$ on $3^n$ vertices is with minimum wirelength.

\medskip\noindent
{\bf Notation.} $C_{lex}^{i}= \{0,1,2,...,3i-1\}$, for $i=1,2,...,3^{n-1}-1$ denotes the first $i$ column vertices of $C_3\times P_{3^{n-1}}$ with vertices labeled as in Embedding Algorithm A. From Remark 3.5, it is clear that $C_{lex}^{i}=lex_{3i}$.
The following lemma is a consequence of Corollary 3.4.

\begin{lemma}\label{column}
$C_{lex}^{i}$ induces maximum subgraph in $Q_n^3$ for $i=1,2,..,3^{n-1}-1$.
\end{lemma}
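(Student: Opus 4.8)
The plan is to show that the vertex set $C_{lex}^{i} = \{0, 1, 2, \ldots, 3i-1\}$, consisting of the first $i$ columns under the lexicographic labeling of Embedding Algorithm A, coincides with the first $3i$ vertices of $Q_n^3$ in lexicographic order, and then to invoke Corollary~\ref{lex}. By the notation established just before the lemma, we already have the key combinatorial identity $C_{lex}^{i} = lex_{3i}$: the labels assigned to the first $i$ columns of the cylinder are exactly $\{3(r-1)+0, 3(r-1)+1, 3(r-1)+2 : 1 \leq r \leq i\} = \{0, 1, \ldots, 3i-1\}$, which are precisely the first $3i$ labels, and the lexicographic ordered embedding assigns these labels to the first $3i$ vertices of $Q_n^3$ in lexicographic order. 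So the real content is to confirm that $lex_{3i}$ induces a maximum subgraph.

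First I would recall that Corollary~\ref{lex} asserts that the lexicographic ordering of the vertices of $Q_n^3$ is an optimal ordering for inducing maximum subgraphs; that is, for every $k$ with $1 \leq k \leq 3^n$, the initial segment $lex_k$ of the first $k$ vertices induces a subgraph with $I_{Q_n^3}(k)$ edges, the maximum possible over all $k$-vertex subsets. Applying this with $k = 3i$ immediately gives that $lex_{3i}$ induces a maximum subgraph for each $i = 1, 2, \ldots, 3^{n-1}-1$. Combining this with the identity $C_{lex}^{i} = lex_{3i}$ from the notation, I would conclude that $C_{lex}^{i}$ induces a maximum subgraph in $Q_n^3$, which is exactly the claim.

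The proof is therefore essentially a direct specialization of Corollary~\ref{lex} to the arithmetic progression of indices $k = 3i$, mediated by the bookkeeping identity between cylinder column labels and lexicographic prefixes. I do not expect a genuine obstacle here, since all the heavy lifting — establishing that lexicographic initial segments are edge-maximizing — is already packaged in Corollary~\ref{lex} (itself descending from Theorem~\ref{AhlswedeC97a} and the $\delta$-sequence analysis of $C_3 \times C_3$). The only point requiring a moment of care is making explicit the correspondence $C_{lex}^{i} = lex_{3i}$, verifying that the top-to-bottom, left-to-right labeling of the cylinder in Embedding Algorithm~A indeed enumerates column-by-column in a way that the first $i$ columns receive exactly the labels $0$ through $3i-1$. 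Once that identification is stated cleanly, the lemma follows in a single line from the corollary, which is why it is framed as a consequence of Corollary~\ref{lex}.
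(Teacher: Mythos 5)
Your proposal is correct and matches the paper exactly: the paper offers no separate proof for this lemma, stating only that it ``is a consequence of corollary 3.4'' after noting in the preceding paragraph that $C_{lex}^{i}=lex_{3i}$, which is precisely your identification followed by an application of Corollary~\ref{lex} with $k=3i$. Your writeup simply makes explicit the bookkeeping the paper leaves implicit.
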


\noindent \textbf{Notation.} $R_{lex}^{j}= \{j,3+j,...,3(3^{n-1})+j\}$, for $j=0,1,2$ denotes the $j^{th}$ row vertices of $C_3\times P_{3^{n-1}}$ with the lexicographic ordered embedding of $Q_n^3$ into $C_{3}\times P_{3^{n-1}}$.

\begin{lemma}\label{row}
$R_{lex}^{j}$ induces maximum subgraph in $Q_n^3$ for $j=0,1,2$.
\end{lemma}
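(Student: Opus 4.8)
The plan is to observe that, in contrast to the column set $C_{lex}^{i}=lex_{3i}$, which is an initial segment of the lexicographic order and is therefore handled immediately by Corollary~\ref{lex} and Lemma~\ref{column}, the row set $R_{lex}^{j}$ is \emph{not} a prefix of the lex order (for instance $R_{lex}^{0}=\{0,3,6,\dots\}$ skips $1$ and $2$). Hence Corollary~\ref{lex} cannot be invoked directly. Instead I would identify the preimage $lex^{-1}(R_{lex}^{j})$ explicitly as a subcube of $Q_{n}^{3}$ and then check that its edge count already meets the edge-isoperimetric value $I_G(3^{n-1})$.

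First I would pin down the preimage. By Embedding Algorithm~A the vertex in position $j$ of column $r$ carries the label $3(r-1)+j$, so $R_{lex}^{j}=\{\ell : 0\le \ell\le 3^{n}-1,\ \ell\equiv j \pmod 3\}$. Under the lexicographic labelling a label $\ell$ is exactly the base-$3$ integer whose digits are the coordinates $(x_{n-1},\dots,x_{0})$, with $x_{0}$ the least significant digit; consequently $\ell\equiv j\pmod 3$ precisely when $x_{0}=j$. Therefore $lex^{-1}(R_{lex}^{j})=\{(x_{n-1},\dots,x_{0})\in V(Q_{n}^{3}) : x_{0}=j\}$, which is exactly the copy $Q_{n-1}^{3}(j)$ arising when $Q_{n}^{3}$ is partitioned on its last coordinate.

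Next I would count edges. Two vertices of this set differ in exactly one coordinate among $x_{1},\dots,x_{n-1}$ (never in $x_{0}$), so the induced subgraph is a full copy of $Q_{n-1}^{3}$; being $2(n-1)$-regular on $3^{n-1}$ vertices it has $\frac{1}{2}\cdot 2(n-1)\cdot 3^{n-1}=(n-1)3^{n-1}$ edges. On the other hand, substituting $k=3^{n-1}$ (so that $r=1$ and $k_{1}=n-1$) into the formula for $I_G(k)$ established above gives $I_G(3^{n-1})=k_{1}3^{k_{1}}=(n-1)3^{n-1}$. The two values coincide, so $R_{lex}^{j}$ induces a maximum subgraph on $3^{n-1}$ vertices.

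The crux is the first identification: recognizing that a row, which is an arithmetic progression of labels with common difference $3$, corresponds through the base-$3$ reading of the lex labels to fixing the single least significant coordinate, and hence to a full $(n-1)$-dimensional subcube. Once that is seen, no optimization over vertex subsets is required---the candidate set is already an entire subcube---and the conclusion follows purely from the regularity of $Q_{n-1}^{3}$ together with the isoperimetric count.
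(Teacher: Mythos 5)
Your proof is correct, and it reaches maximality by a genuinely different mechanism than the paper's. The paper argues via isomorphism to a lexicographic prefix: it defines a digit-rotation map $\varphi^{j}:R_{lex}^{j}\rightarrow lex_{3^{n-1}}$ (sending the tuple $(\gamma_{1},\dots,\gamma_{n})$ to $(\gamma_{2},\dots,\gamma_{n},\gamma_{1}+j)$), checks that it preserves adjacency in both directions, and concludes that the row induces a subgraph isomorphic to the one induced by $lex_{3^{n-1}}$, which is maximum by Corollary~\ref{lex}. You bypass the prefix entirely: you identify $lex^{-1}(R_{lex}^{j})$ as the subcube $Q_{n-1}^{3}(j)$ obtained by fixing $x_{0}=j$ (your observation that $\ell\equiv j\pmod 3$ iff $x_{0}=j$ under the base-$3$ reading of lex labels is exactly right), count its $(n-1)3^{n-1}$ edges by $2(n-1)$-regularity, and match this against $I_{G}(3^{n-1})=(n-1)3^{n-1}$ from Theorem 3.5 with $r=1$, $k_{1}=n-1$. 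The underlying structural fact is the same in both arguments --- a row is a full $(n-1)$-dimensional subcube, just as the prefix $lex_{3^{n-1}}$ is the subcube $x_{n-1}=0$, so the paper's $\varphi^{j}$ is in essence the digit permutation carrying one subcube onto the other --- but the certificates of maximality differ. Your route is self-contained and numerically explicit, and it sidesteps the paper's map $\varphi^{j}$, whose printed formula $\varphi^{j}(3k+l)=3l+k+j$ does not actually land in $lex_{3^{n-1}}$ for $j\neq 0$ (evidently a typo; on row $j$ one has $l=j$ and the intended map is essentially $3k+j\mapsto k$). The paper's route needs only the optimality of the lexicographic order (Corollary~\ref{lex}) and no closed-form edge count, so it would survive even without Theorem 3.5; on the other hand, it proves slightly more than the lemma requires, namely an induced-subgraph isomorphism between the row and the prefix, whereas your argument certifies maximality directly from the isoperimetric value.
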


\begin{proof}
From Lemma \ref{column}, we know that the lexicographic ordering columnwise induces a maximum subgraph. Hence to prove this lemma we have to show that the vertices in each row is isomorphic to subgraph induced by lexicographic ordering $0,1,2,...,3^{n-1}-1$. For $j=0,1,2$, define $\varphi^{j}:R_{lex}^{j}\rightarrow lex_{3^{n-1}}$ by $\varphi^{j}(3k+l)=3l+k+j$. If the $n$-tuple representation of integer $3k+l$ is $(\gamma_{1},\gamma_{2},...,\gamma_{n})$, then the $n$-tuple representation of integer $3l+k+j$ is $(\gamma_{2},\gamma_{3},...,\gamma_{n},\gamma_{1}+j)$. Thus if the $n$-tuple representation in two numbers $x$ and $y$ differ in exactly one bit, then it also holds good for $f(x)$ and $f(y)$. This implies that $(x,y)$ is an edge in $Q_n^3$ if and only if $(f(x),f(y))$ is an edge in $Q_n^3$. Thus $R_{lex}^{j}$ and $lex_{3^{n-1}}$ are isomorphic, which implies that $R_{lex}^{j}$ induces a maximum subgraph in $Q_{n}^{3}$.
\end{proof}

 \begin{theorem}
 The wirelength $WL(Q_n^3, C_3\times P_{3^{n-1}})$ is minimum for lexicographic ordered embedding $lex$ of $Q_n^3$ into $C_3\times P_{3^{n-1}}, n\geq2$.
 \end{theorem}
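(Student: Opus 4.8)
The plan is to exhibit the lexicographic embedding as optimal by pairing the Congestion Lemma with the $k$-Partition Lemma, exactly as the two preparatory Lemmas \ref{column} and \ref{row} were set up to allow. First I would split the edges of the host $C_3\times P_{3^{n-1}}$ into two families: the \emph{bridge} edges joining consecutive columns (the $P_{3^{n-1}}$-edges) and the \emph{triangle} edges inside each column (the $C_3$-edges, including the wraparound). Matching these, I would use the two families of cuts pictured in Figure \ref{Figure:2}: the vertical cuts $X_i^{t}$ and the horizontal cuts $Y_j$. The bookkeeping step is to check that, taken with the indicated multiplicity, these exhaust the multiset $[kE(H)]$: each bridge lies in exactly one vertical cut (so it is repeated via the superscript $t$), while each triangle edge is covered precisely by the horizontal cuts that isolate the two rows it meets. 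This is the only purely combinatorial verification and it pins down the value of $k$.

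Next I would analyse the two cut types separately. Deleting the three bridges between column $i$ and column $i+1$ in a cut $X_i$ splits the cylinder into its first $i$ columns and the rest; since the first $i$ columns equal $C_{lex}^{i}=lex_{3i}$, Lemma \ref{column} gives that its $lex$-preimage is a maximum subgraph, and by the $2n$-regularity of $Q_n^3$ with the Remark after Lemma \ref{congestion lemma} the complementary side is maximum as well. Deleting the two triangle edges incident with row $j$ in every column in a cut $Y_j$ separates the row $R_{lex}^{j}$ from the other two rows, and Lemma \ref{row} makes both sides maximum subgraphs. In each case I must confirm the three hypotheses of the Congestion Lemma for the $lex$ path system: conditions $1$ and $3$ are immediate from the component description just given, and condition $2$ asks that every guest edge crossing a cut be routed through exactly one cut edge.

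Granting this, the Congestion Lemma evaluates all congestions in closed form because $Q_n^3$ is $2n$-regular: $c_{lex}(X_i)=2n\cdot 3i-2I_G(3i)$ and $c_{lex}(Y_j)=2n\cdot 3^{n-1}-2I_G(3^{n-1})$, with the explicit values of $I_G$ supplied by the closed form for $I_G(k)$ established above. The optimality then follows from a universal lower bound: for \emph{any} embedding $g$ and a fixed cut $S$ splitting $H$ into $H_1,H_2$, the congestion is at least the guest-edge boundary, $c_g(S)\ge 2n|H_1|-2|E(g^{-1}(H_1))|\ge 2n|H_1|-2I_G(|H_1|)$, a quantity depending only on the size $|H_1|$ that the host fixes. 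Summing these per-cut bounds through the $k$-Partition Lemma yields $WL_g(Q_n^3,H)\ge \tfrac1k\sum_i c_{lex}(S_i)$ for every $g$, while the Partition Lemma applied to $lex$ itself gives $WL_{lex}=\tfrac1k\sum_i c_{lex}(S_i)$; hence $lex$ attains the minimum.

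I expect the main obstacle to be two-fold and to lie entirely in the second paragraph: first, the accounting that turns the two cut families into a genuine partition of $[kE(H)]$, and second, the verification of condition $2$ of the Congestion Lemma for the horizontal cuts. Because each column is a triangle $C_3$ with a wraparound edge, a guest edge mapped to endpoints in row $j$ and another row could a priori be routed so as to meet the isolating cut $Y_j$ more than once; I must argue that the $lex$ path leaves row $j$ exactly once, so the crossing is single. Once the path system is fixed this reduces to a short case check on the three triangle edges per column, after which the regularity of $Q_n^3$ makes every congestion value — and therefore the claimed optimality — drop out of Lemmas \ref{column}, \ref{row}, the Congestion Lemma, and the formula for $I_G$.
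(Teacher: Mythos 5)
Your proposal is correct and follows essentially the same route as the paper's own proof: the same vertical cuts $X_i^{t}$ and horizontal cuts $Y_j$, with Lemma \ref{column} and Lemma \ref{row} supplying the maximum-subgraph hypothesis of the Congestion Lemma and the $k$-Partition Lemma (with $k=2$) assembling the cut congestions into the wirelength. The only difference is expository: you make explicit the multiset accounting that shows $k=2$ and unpack the per-cut lower bound $c_g(S)\ge 2n|H_1|-2I_G(|H_1|)$ valid for every embedding $g$, details the paper leaves implicit in its citation of the two lemmas.
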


\begin{figure}[!h]
\centering
 \includegraphics[width=0.9\textwidth]{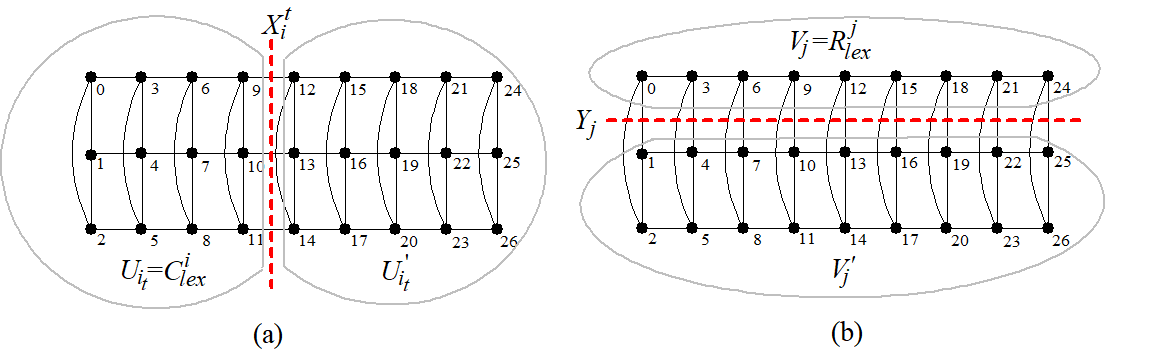}
    \caption{(a) Vertical edge cuts $X_{i}^{t}$, $1\leq i \leq 8$, $1\leq t \leq 2$ of Cylinder $C_3\times P_{9}$ with lexicographic ordering. (b)Horizontal edge cuts $Y_{j}$, $1\leq j \leq 2$ of Cylinder $C_3\times P_{9}$ with lexicographic ordering. }
    \label{Figure:2}
    \end{figure}

 \begin{proof}
 Consider the lexicographic embedding $lex:Q_n^3\rightarrow C_3\times P_{3^{n-1}}$ given in the Embedding Algorithm A. $X_{i}^{t}$, $i = 1,2,..,3^{n-1}-1$ and $t=1,2$, shown in Figure \ref{Figure:2}(a) is the vertical edge cut of the cylinder $C_3\times P_{3^{n-1}}$. Removal of $X_{i}^{t}$ disconnects $C_3\times P_{3^{n-1}}$ into two components $U_{i_{t}}$ and $U_{i_{t}}^{'}$, where $V(U_{i_{t}})=C_{lex}^{i}$. $Y_j$, $0\leq j \leq 2$ as shown in Figure \ref{Figure:2}(b) are the horizontal edge cuts of the cylinder $C_3\times P_{3^{n-1}}$. Thus $Y_j$ disconnects $C_3\times P_{3^{n-1}}$ into two components $V_j$ and $V_{j}^{'}$, where $V(V_{j})=R_{lex}^{j}$. See Figure \ref{Figure:2}(b). Let $S_{i_{t}}$ and $S_{i_{t}}^{'}$ be the preimages of $U_{i_{t}}$ and $U_{i_{t}}^{'}$ in $Q_n^3$ under lexicographic ordering respectively. The edge partition $X_{i}^{t}$ satisfies the first two conditions of the congestion lemma. To satisfy condition (iii) of the congestion lemma, it is enough to prove that the edges induced by the preimages $S_{i_{t}}$ and $S_{i_{t}}^{'}$ are maximum subgraphs. That is, congestion $c_{f}(X_{i}^{t})$ is minimum, where $S_{i_{t}}$ is the subgraph induced by the vertices of $C_{lex}^{i}$.
 By Lemma \ref{column}, $S_{i_{t}}$ is a maximum subgraph in $Q_n^3$. Hence by the Congestion Lemma $c_{f}(X_{i}^{t})$ is minimum for $i = 1,2,..,3^{n-1}-1$. Similarly, let $T_j$ and $T_{j^{'}}$ be the preimages of $V_j$ and $V_{j^{'}}$ in $Q_n^3$ under lexicographic ordering respectively. By Lemma \ref{row}, $T_j$ is a maximum subgraph induced by the vertices of $R_{lex}^{j}$. Hence by the Congestion Lemma $c_{f}(Y_j)$ is minimum for $j=0,1,2$. Partition Lemma consequently implies that $WL(Q_n^3,C_3\times P_{3^{n-1}})$ is minimum.
 \end{proof}

\begin{theorem}
The minimum wirelength of embedding $Q_n^3$ into $C_3\times P_{3^{n-1}}$ is given by \[WL(Q_n^3, C_3\times P_{3^{n-1}})=3^{n-1}\Big(2\big(3^{n-1}-1\big)+3 \Big).\]
\end{theorem}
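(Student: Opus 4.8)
The plan is to evaluate the wirelength of the optimal embedding $lex$ — whose optimality is guaranteed by the preceding theorem — by running the $k$-Partition Lemma on the very edge cuts used there. The crucial preliminary choice is the multiplicity $k=2$. With this choice the two families of cuts tile the doubled edge set $[2E(C_3\times P_{3^{n-1}})]$ exactly: the two copies $X_i^1,X_i^2$ ($1\le i\le 3^{n-1}-1$) of each vertical cut together cover, each exactly twice, the three host edges joining columns $i$ and $i+1$, while the three horizontal cuts $Y_0,Y_1,Y_2$ cover each edge of a column triangle $C_3$ exactly twice, since an edge joining rows $a$ and $b$ lies in precisely the two cuts $Y_a$ and $Y_b$. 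Hence $\{X_i^t\}\cup\{Y_j\}$ is a partition of $[2E(H)]$ into cuts each satisfying the Congestion Lemma, and the $k$-Partition Lemma gives
\[
WL(Q_n^3,C_3\times P_{3^{n-1}})=\frac12\Big(\sum_{i=1}^{3^{n-1}-1}\big(c_f(X_i^1)+c_f(X_i^2)\big)+\sum_{j=0}^{2}c_f(Y_j)\Big).
\]

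First I would dispatch the horizontal cuts, the easy half. By Lemma \ref{row}, $Y_j$ separates the full row $R_{lex}^j$, a set of $3^{n-1}$ vertices inducing a maximum subgraph of $Q_n^3$; since $Q_n^3$ is $2n$-regular, Lemma \ref{congestion lemma} yields $c_f(Y_j)=2n\cdot 3^{n-1}-2\,I_G(3^{n-1})$. Writing $3^{n-1}=3^{k_1}$ with $k_1=n-1$ and $r=1$ in the edge-isoperimetric formula gives $I_G(3^{n-1})=(n-1)3^{n-1}$, whence each $c_f(Y_j)=2\cdot 3^{n-1}$ and $\sum_{j=0}^2 c_f(Y_j)=6\cdot 3^{n-1}$.

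The substance of the proof — and the step I expect to be the main obstacle — is the total congestion of the vertical cuts, where $c_f(X_i^1)=c_f(X_i^2)=2n\cdot 3i-2\,I_G(3i)$. Summing the isoperimetric terms $I_G(3i)$ directly over all $i$ is awkward, because $I_G(3i)$ depends on the base-$3$ digits of $3i$. I would instead evaluate the sum globally: since $c_f(X_i)$ counts exactly the edges of $Q_n^3$ crossing the cut after column $i$, summing over $i$ counts each edge once for every cut separating its endpoints, so
\[
\sum_{i=1}^{3^{n-1}-1}c_f(X_i)=\sum_{(u,v)\in E(Q_n^3)}\big|\,\mathrm{col}(u)-\mathrm{col}(v)\,\big|,
\]
where $\mathrm{col}(\cdot)$ is the column index in the $lex$ layout. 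Under the lexicographic labeling the column index is governed by the coordinates $x_{n-1},\dots,x_1$ and the row by $x_0$, so an edge in direction $x_0$ has column-span $0$ and contributes nothing, while the edges in a fixed direction $x_j$ ($1\le j\le n-1$) split into $3^{n-1}$ triangles whose three edges have spans $3^{j-1},3^{j-1},2\cdot 3^{j-1}$, i.e. total $4\cdot 3^{j-1}$ per triangle. Summing the resulting geometric series $4\cdot 3^{n-1}\sum_{j=1}^{n-1}3^{j-1}$ collapses to $2\cdot 3^{n-1}(3^{n-1}-1)$.

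Finally, substituting both totals back into the $k$-Partition Lemma,
\[
WL(Q_n^3,C_3\times P_{3^{n-1}})=\frac12\Big(2\cdot 2\cdot 3^{n-1}(3^{n-1}-1)+6\cdot 3^{n-1}\Big)=3^{n-1}\big(2(3^{n-1}-1)+3\big),
\]
which is the claimed value; a direct check at $n=2$ (host $C_3\times P_3$, value $21$) confirms the arithmetic. The only genuine care is in the vertical-cut sum, where I prefer the column-span decomposition over a term-by-term evaluation of $I_G(3i)$.
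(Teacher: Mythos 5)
Your proposal is correct and follows essentially the same route as the paper: the $2$-Partition Lemma applied to the same vertical cuts $X_i^t$ and horizontal cuts $Y_j$, whose minimum congestion is guaranteed by the Congestion Lemma via Lemmas \ref{column} and \ref{row}. The only difference is that you actually derive the totals the paper merely asserts --- your verification that the cuts tile $[2E(H)]$, the computation $c_f(Y_j)=2n\cdot 3^{n-1}-2(n-1)3^{n-1}=2\cdot 3^{n-1}$, and especially the column-span evaluation $\sum_i c_f(X_i^t)=\sum_{(u,v)\in E(Q_n^3)}\big|\mathrm{col}(u)-\mathrm{col}(v)\big|=2\cdot 3^{n-1}\big(3^{n-1}-1\big)$, which neatly sidesteps a digit-dependent term-by-term summation of $I_G(3i)$ and correctly reproduces the paper's unexplained figures $4\big(3^{n-1}\big)\big(3^{n-1}-1\big)$ and $6\big(3^{n-1}\big)$.
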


\begin{proof}
By Congestion Lemma and 2-Partition Lemma,
 \begin{equation*}
     \begin{split}
      WL(Q_n^3, C_3\times P_{3^{n-1}} )&= \frac{1}{2}\bigg(\sum_{t=1}^{2}\sum_{i=1}^{(3^{n-1})-1} c_{lex}(X_{i}^{t})+\sum_{j=0}^{2} c_{lex}(Y_j)\bigg)\\
     &=\frac{1}{2}\bigg(4\Big(3^{n-1}\Big)\Big(3^{n-1}-1\Big)+6 \Big(3^{n-1}\Big)\bigg)\\
     &=3^{n-1}\Big(2\big(3^{n-1}-1\big)+3 \Big).
     \end{split}
 \end{equation*}

 \vspace*{-8mm}
\end{proof}

\section{Embedding of \boldmath {$Q_n^3$} into certain  trees}

 A tree is an acyclic connected graph. Trees are the most basic graph-theoretic models utilised in various domains, including automatic classification, information theory, data structure and analysis, artificial intelligence, algorithm design, operation research, combinatorial optimization, electrical network theory and network design \cite{xu2013topological}. We have embedded 3-ary $n$-cubes into certain trees such as caterpillar, firecracker graph and banana tree which are well known in the literature by satisfying the property of some graph variants \cite{wijaya,gallian2007survey,swaminathan2006super}. The research on caterpillars and their embeddings \cite{bezrukov1998embedding,manuel2011embedding} reveal that embedding problems are not simple. For instance, in \cite{haralambides1991bandwidth,monien1986bandwidth} the authors demonstrated the {\it NP}-completeness of determining the least dilation of embedding a caterpillar into chain. These predominant use of trees in networks motivated us to study the embedding of 3-ary $n$-cubes into certain trees mentioned above. In a tree traversal, labeling the vertices first time one visits is called preorder traversal.

\subsection{Wirelength of embedding \boldmath {$Q_n^3$} in caterpillar}

\begin{definition}\cite{DBLP:journals/dam/ManuelARR11}
 A {\it caterpillar} is a tree which will be a path if all its leaves are deleted. The path which is retained is called the backbone of the caterpillar.
 \end{definition}

\noindent  \textbf{Embedding Algorithm B:}\\
{\it Input:} The 3-ary $n$-cube, $Q_n^3$ and 2-regular caterpillar denoted by 2-CAT on $3^n$ vertices.\\
{\it Algorithm:} Label the vertices of 3-ary $n$-cube, $Q_n^3$ and caterpillar using lexicographic ordering and preorder traversal respectively.\\
{\it Output:} The embedding $lex$ of 3-ary $n$-cube, $Q_n^3$ into caterpillar on $3^n$ vertices is with minimum wirelength.
\begin{lemma}\label{cat edge cut}
The edge cuts $S_{i}$, $1 \leq i \leq 3^{n-1}-1$ and $T_{j}$, $1 \leq j \leq 2(3^{n-1})$ as shown in Figure \ref{Figure:3} induce maximum subgraphs in $Q_n^3$.
\end{lemma}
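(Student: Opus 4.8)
The plan is to follow exactly the strategy already used for the cylinder embedding. Since every edge of a tree is a bridge, each edge cut $S_i$ or $T_j$ consists of a single edge whose removal splits the $2$-CAT into precisely two components; hence conditions (i) and (ii) of the Congestion Lemma (Lemma \ref{congestion lemma}) are automatic, and the whole content of the lemma reduces to verifying condition (iii), namely that the lexicographic preimages of the two resulting components are maximum subgraphs of $Q_n^3$. The first step is therefore to pin down the labeling produced by Embedding Algorithm B. Rooting the backbone at its left end and ordering the children of each backbone vertex as (leaf, leaf, next backbone vertex), the preorder traversal assigns to the $r$-th backbone vertex the label $3(r-1)$ and to its two pendant leaves the labels $3(r-1)+1$ and $3(r-1)+2$, for $r=1,2,\ldots,3^{n-1}$. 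Consequently the labels $\{0,1,\ldots,3i-1\}$ are exactly the vertices of the first $i$ backbone positions together with all their leaves.

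For the backbone cuts, $S_i$ is the single backbone edge joining vertex $3(i-1)$ to vertex $3i$, and its removal leaves one component whose label set is precisely $\{0,1,\ldots,3i-1\}=lex_{3i}=C_{lex}^{i}$. By Lemma \ref{column}, $C_{lex}^{i}$ induces a maximum subgraph in $Q_n^3$ for $1\leq i\leq 3^{n-1}-1$, so the preimage of this component is a maximum subgraph. Since $Q_n^3$ is $2n$-regular, the remark following the Congestion Lemma shows that the preimage of the complementary component is then also a maximum subgraph. This settles all the cuts $S_i$.

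For the leaf cuts, $T_j$ is a pendant edge, so one of its two components is a single leaf vertex. Its preimage is a single vertex of $Q_n^3$, which induces $I_{Q_n^3}(1)=0$ edges and is therefore trivially a maximum subgraph; invoking the regularity remark once more, the preimage of the remaining $3^n-1$ vertices is a maximum subgraph as well. Hence every $T_j$, $1\leq j\leq 2(3^{n-1})$, also meets condition (iii), completing the verification.

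The only genuinely delicate point is the first step: confirming that the preorder labeling really makes each backbone cut peel off an initial segment $\{0,1,\ldots,3i-1\}$ of the lexicographic order, i.e. that the backbone-plus-leaf labeling aligns block-by-block with the lexicographic prefixes $lex_{3i}$. Once this correspondence is established, the conclusion for both families of cuts is immediate from Lemma \ref{column} and the $2n$-regularity of $Q_n^3$, with no further edge-counting required.
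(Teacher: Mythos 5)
Your proposal is correct and follows essentially the same route as the paper: the backbone cuts $S_i$ peel off the lexicographic prefix $lex_{3i}=C_{lex}^{i}$, which is a maximum subgraph by the lexicographic-optimality results, while each $T_j$ isolates a single vertex, which is trivially maximum. You are merely more explicit than the paper about the preorder labeling, the reduction to condition (iii) of the Congestion Lemma, and the use of $2n$-regularity for the complementary components — details the paper leaves implicit.
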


\begin{figure}[!h]
    \centering
   \includegraphics[width=0.84\textwidth]{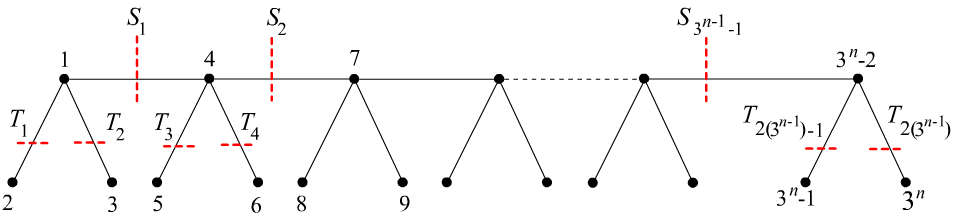}
    \caption{Edge cuts of Caterpillar.}
    \label{Figure:3}\vspace*{-4mm}
\end{figure}

\begin{proof}
By Theorem \ref{AhlswedeC97a}, the lexicographic ordering of vertices of $Q_n^3$ gives the optimal order for inducing the maximum subgraph. The edge cut $S_i$ removes the edges in the backbone of the caterpillar, such that each $S_i$ disconnects it into two components of lexicographic ordering which induce a maximum subgraph in $Q_n^3$. The edge cut $T_j$ disconnects the caterpillar with exactly one vertex as one of the components. Hence $S_{i}, \forall \ i=1,2,...,3^{n-1}-1$ and $T_{j}, \forall \ j=1,2,...,2(3^{n-1})$ induce maximum subgraph in $Q_n^3$.
\end{proof}

\begin{lemma} The Embedding Algorithm B gives minimum wirelength of embedding $Q_n^3$ into 2-regular caterpillar.
\end{lemma}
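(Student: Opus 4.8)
The plan is to assemble the conclusion from Lemma \ref{cat edge cut}, the Congestion Lemma, and the $k$-Partition Lemma with $k=1$. The first observation is structural: the 2-regular caterpillar 2-CAT is a tree on $3^n$ vertices, so each of its $3^n-1$ edges is a bridge, and deleting any single edge splits the host into exactly two components. Thus each backbone edge $S_i$ ($1\leq i \leq 3^{n-1}-1$) and each leaf edge $T_j$ ($1\leq j \leq 2(3^{n-1})$) is a legitimate edge cut. I would then check that these cuts exhaust $E(\text{2-CAT})$ with no repetition, since $(3^{n-1}-1)+2(3^{n-1})=3^n-1$; hence $\{S_i\}\cup\{T_j\}$ is a partition of $[1\cdot E(\text{2-CAT})]$ into edge cuts, placing us squarely in the hypothesis of the 1-Partition Lemma.

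Next I would verify the three conditions of the Congestion Lemma for every one of these cuts. Conditions (i) and (ii) come for free from the tree structure: because a tree carries a unique path between any two of its vertices, a guest edge $(a,b)$ whose endpoints map into the same component has $P_{lex}(a,b)$ confined to that component and avoiding the cut edge, while a guest edge crossing between the two components is routed along the unique host path that passes through the single cut edge exactly once. Condition (iii), that the preimages $lex^{-1}(H_1)$ and $lex^{-1}(H_2)$ be maximum subgraphs of $Q_n^3$, is exactly what Lemma \ref{cat edge cut} supplies: for each $S_i$ the two sides are initial and terminal segments of the lexicographic order, and for each $T_j$ one side is a single vertex (trivially maximum, as $I_G(1)=0$). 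Since $Q_n^3$ is $2n$-regular, the remark following the Congestion Lemma guarantees that a maximum subgraph on one side forces a maximum subgraph on the other.

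With all three conditions in hand, the Congestion Lemma gives that $c_{lex}(S_i)$ and $c_{lex}(T_j)$ are each minimized under the embedding of Algorithm B. Applying the 1-Partition Lemma then yields
\[
WL_{lex}(Q_n^3,\text{2-CAT}) = \sum_{i=1}^{3^{n-1}-1} c_{lex}(S_i) + \sum_{j=1}^{2(3^{n-1})} c_{lex}(T_j),
\]
and since every summand on the right is individually minimum, the sum is minimum; therefore $WL_{lex}(Q_n^3,\text{2-CAT}) = WL(Q_n^3,\text{2-CAT})$, which is the claim.

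I expect the only real bookkeeping to be confirming that $\{S_i\}\cup\{T_j\}$ is precisely the edge set of the caterpillar, so that the choice $k=1$ is justified and no edge is double-counted, and noting that the singleton side of each leaf cut $T_j$ qualifies as a maximum subgraph. The substantive step, condition (iii), has already been discharged in Lemma \ref{cat edge cut}; granted that, the argument is a clean mechanical application of the two named lemmas, mirroring the cylinder proof but using a $1$-partition in place of a $2$-partition because every edge of a tree is itself a cut.
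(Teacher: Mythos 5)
Your proposal is correct and follows essentially the same route as the paper's own proof: invoke Lemma \ref{cat edge cut} to discharge condition (iii) of the Congestion Lemma for each single-edge cut $S_i$ and $T_j$, conclude each congestion $c_{lex}(S_i)$, $c_{lex}(T_j)$ is minimum, and sum via the Partition Lemma (with $k=1$, since the $3^n-1$ tree edges are exactly the cuts). Your version merely makes explicit what the paper leaves implicit --- the edge count $(3^{n-1}-1)+2(3^{n-1})=3^n-1$, the tree-path verification of conditions (i)--(ii), and the triviality of the singleton side of each $T_j$ --- which is sound bookkeeping, not a different argument.
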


\noindent {\bf Proof.} By Lemma \ref{cat edge cut} the edge cuts $S_{i}$ and $T_{j}$ satisfy conditions of the Congestion Lemma. Therefore $c_{f}(S_{i})$ and $c_{f}(T_{j})$ are minimum. Then the partition lemma implies that wirelength is minimum.

\begin{theorem}
The minimum wirelength of embedding $Q_n^3$ into caterpillar is given by \[WL(Q_n^3, \text{2-CAT})=2(3^{n-1})(3^{n-1}-1)+(4n)(3^{n-1}).\]
\end{theorem}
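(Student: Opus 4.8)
The plan is to read off the wirelength from the Congestion Lemma (Lemma \ref{congestion lemma}) together with the Partition Lemma, using that $\text{2-CAT}$ is a tree. Since every edge of a tree is a bridge, I would apply the Partition Lemma with $k=1$ and take each single edge of $\text{2-CAT}$ as its own edge cut. The edge set then splits into the backbone cuts $S_i$, $1\le i\le 3^{n-1}-1$, and the leaf cuts $T_j$, $1\le j\le 2(3^{n-1})$, and by Lemma \ref{cat edge cut} both families satisfy the hypotheses of the Congestion Lemma and realise the optimal embedding of the previous lemma. Hence it suffices to evaluate
\[
WL(Q_n^3,\text{2-CAT})=\sum_{i=1}^{3^{n-1}-1}c_{lex}(S_i)+\sum_{j=1}^{2(3^{n-1})}c_{lex}(T_j).
\]

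The leaf cuts are the easy half. Each $T_j$ isolates a single leaf, whose component induces no edges; since $Q_n^3$ is $2n$-regular, the Congestion Lemma gives $c_{lex}(T_j)=2n-0=2n$. Summing over the $2(3^{n-1})$ leaves yields $2n\cdot 2(3^{n-1})=(4n)(3^{n-1})$, which is precisely the second summand of the claim.

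The backbone cuts carry the real work. The cut $S_i$ separates the first $3i$ vertices in lexicographic order, the set $lex_{3i}$, which induces a maximum subgraph by Lemma \ref{column}; the Congestion Lemma then gives $c_{lex}(S_i)=2n(3i)-2I_G(3i)$. I would evaluate $\sum_{i=1}^{3^{n-1}-1}\bigl(2n(3i)-2I_G(3i)\bigr)$ and show it collapses to $2(3^{n-1})(3^{n-1}-1)$. A cleaner route that sidesteps the base-$3$ case analysis in the formula for $I_G$ is to count by guest edges: $\sum_i c_{lex}(S_i)$ equals the number of incidences between edges of $Q_n^3$ and backbone cuts, and an edge $(u,v)$ meets $S_i$ precisely when some multiple $3i$ lies strictly between the labels of $u$ and $v$. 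Edges whose endpoints differ only in the last coordinate never separate a backbone cut and contribute $0$; fixing the last coordinate in the remaining edges decomposes them into three disjoint copies of $Q_{n-1}^3$, and the number of qualifying multiples across an edge with copy-labels $p,q$ is exactly $|p-q|$, so the total is $3\sum_{(p,q)\in E(Q_{n-1}^3)}|p-q|=2(3^{n-1})(3^{n-1}-1)$. This last equality is the total vertical-cut congestion already computed for the cylinder $C_3\times P_{3^{n-1}}$, so it may alternatively just be quoted.

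Adding the two contributions gives $WL(Q_n^3,\text{2-CAT})=2(3^{n-1})(3^{n-1}-1)+(4n)(3^{n-1})$. The step I expect to be the main obstacle is exactly the backbone summation: because $I_G(3i)$ depends on the base-$3$ digits of $3i$, a direct term-by-term evaluation is awkward, and the most transparent argument is the incidence count that reduces it to the label-gap sum over the three copies of $Q_{n-1}^3$ (equivalently, to the vertical-cut congestion of the cylinder), after which the final addition is routine.
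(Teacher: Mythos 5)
Your proposal follows the paper's proof exactly: the same partition of $E(\text{2-CAT})$ into the backbone cuts $S_i$ and the leaf cuts $T_j$, the same appeal to Lemma \ref{cat edge cut} with the Congestion and Partition Lemmas, and the same congestion values $c_{lex}(T_j)=2n$ and $c_{lex}(S_i)=2n(3i)-2I_G(3i)$. The only difference is that the paper simply asserts $\sum_{i=1}^{3^{n-1}-1}\bigl(2n(3i)-2I_G(3i)\bigr)=2(3^{n-1})(3^{n-1}-1)$ without showing the evaluation, whereas you justify it by the label-gap incidence count over the three copies of $Q_{n-1}^3$ (correct, provided ``strictly between'' is read as the half-open condition $\mathrm{lab}(u)<3i\le \mathrm{lab}(v)$, which is what your count of $|p-q|$ qualifying multiples actually uses), so your write-up fills in precisely the step the paper leaves implicit.
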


\begin{proof}
By Congestion Lemma and Partition Lemma,
\begin{align*}
  WL(Q_n^3, \text{2-CAT})&= \sum_{i=1}^{3^{n-1}-1} c_{f}(S_{i})+\sum_{j=1}^{2(3^{n-1})} c_{f}(T_{j})\\
     &= \sum_{i=1}^{3^{n-1}-1}((2n)(3i)-2|E(3i)|)+(2n)(2(3^{n-1}))\\
     &=2(3^{n-1})(3^{n-1}-1)+(4n)(3^{n-1}).
\end{align*}

\vspace*{-8mm}
\end{proof}

\subsection{Wirelength of embedding \boldmath {$Q_n^3$} in Firecracker graph}

\begin{definition}\cite{chen1997operations}
A {\it firecracker graph} $F_{n,k}$ is a graph obtained by the concatenation
of $n$, $k$-stars by linking one leaf from each.
 \end{definition}

\noindent  In what follows, we consider concatenation of $3^{n-1}$ number of 3-stars.

\medskip
\noindent \textbf{Embedding Algorithm C:}\\
{\it Input:} The 3-ary $n$-cube, $Q_n^3$ and firecracker graph, $F_{3^{n-1},3}$ on $3^n$ vertices.\\
{\it Algorithm:} Label the vertices of 3-ary $n$-cube, $Q_n^3$ and firecracker graph, $F_{3^{n-1},3}$ using lexicographic ordering and preorder traversal respectively.\\
{\it Output:} The embedding $lex$ of 3-ary $n$-cube, $Q_n^3$ into firecracker graph, $F_{3^{n-1},3}$ on $3^n$ vertices is with minimum wirelength.

\begin{lemma}
The edge cut $S_{i}, \ \forall \ i=1,2,...,3^{n-1}-1$ of $F_{3^{n-1},3}$ as shown in Figure \ref{Figure:4} induces maximum subgraph in $Q_{n}^{3}$.
\end{lemma}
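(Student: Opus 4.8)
The plan is to mirror the structure of the caterpillar lemma (Lemma \ref{cat edge cut}) almost verbatim, since the firecracker graph $F_{3^{n-1},3}$ shares the essential feature we exploit: it has a backbone path whose edges, when removed, split the host into two contiguous blocks of preorder labels. First I would recall from Corollary \ref{lex} and Theorem \ref{AhlswedeC97a} that the lexicographic ordering of $V(Q_n^3)$ is the optimal order for inducing maximum subgraphs. The key is then to understand the preorder traversal of $F_{3^{n-1},3}$ precisely enough to identify what the edge cut $S_i$ separates. In a firecracker built by concatenating $3^{n-1}$ copies of a $3$-star through linked leaves, the backbone is the path running through the linking leaves, and the $S_i$ are exactly the backbone edges. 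Removing $S_i$ breaks the graph into an initial segment containing the first $i$ stars (i.e. the first $3i$ preorder labels) and the complementary segment.

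The main step is to verify that the preorder labeling assigns the numbers $0,1,\dots,3i-1$ to one side of $S_i$ and $3i,\dots,3^n-1$ to the other, so that under $lex$ the preimage of each component is precisely $lex_{3i}$ or its complement. Granting this, the preimage of one component is the initial lexicographic block of size $3i$, which by Corollary \ref{lex} induces a maximum subgraph in $Q_n^3$; since $Q_n^3$ is $2n$-regular, the Remark following the Congestion Lemma guarantees the complementary block is maximum as well. Hence both components are maximum subgraphs and $S_i$ induces a maximum subgraph in $Q_n^3$, which is what the lemma asserts.

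The hard part will be pinning down the preorder traversal so that each backbone cut really does correspond to a consecutive block of labels $\{0,1,\dots,3i-1\}$. A naive preorder of a $3$-star attached via a linked leaf could interleave the central vertex, the two free leaves, and the linking leaf in an order that does not produce a clean block of three consecutive labels per star; I would need to fix the traversal convention in Embedding Algorithm C (or argue it is forced) so that the $i$-th star occupies labels $3(i-1),3(i-1)+1,3(i-1)+2$. Once the block structure is confirmed, the conclusion follows immediately from the cited results, so I expect the combinatorial bookkeeping of the traversal — not the isoperimetric content — to be the only genuine obstacle.

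\begin{proof}
By Theorem \ref{AhlswedeC97a} and Corollary \ref{lex}, the lexicographic ordering of the vertices of $Q_n^3$ is an optimal order for inducing maximum subgraphs. Under the preorder traversal of $F_{3^{n-1},3}$ prescribed in Embedding Algorithm C, the three vertices of the $i$-th $3$-star receive the consecutive labels $3(i-1),\,3(i-1)+1,\,3(i-1)+2$. The edge cut $S_i$ removes a single backbone edge joining the $i$-th and $(i+1)$-th stars, and its removal disconnects $F_{3^{n-1},3}$ into two components, one of which is labeled by $\{0,1,\dots,3i-1\}$. Under the embedding $lex$, the preimage of this component is exactly $lex_{3i}$, which by Corollary \ref{lex} induces a maximum subgraph in $Q_n^3$. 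Since $Q_n^3$ is $2n$-regular, the Remark following the Congestion Lemma ensures that the complementary preimage is a maximum subgraph as well. Hence $S_i$ induces a maximum subgraph in $Q_n^3$ for every $i=1,2,\dots,3^{n-1}-1$.
\end{proof}
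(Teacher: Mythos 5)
Your proof is correct and takes essentially the same route as the paper's: the paper's own (one-sentence) argument likewise observes that removing $S_i$ splits $F_{3^{n-1},3}$ into two components whose inverse images under $lex$ are initial lexicographic segments and their complements, hence maximum subgraphs by Corollary \ref{lex}. Your additional bookkeeping — checking that preorder assigns labels $3(i-1),3(i-1)+1,3(i-1)+2$ to the $i$-th star so the cut separates exactly $lex_{3i}$, and invoking the regularity remark for the complementary component — only makes explicit what the paper leaves implicit.
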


\begin{figure}[!h]
    \centering
   \includegraphics[width=0.65\textwidth]{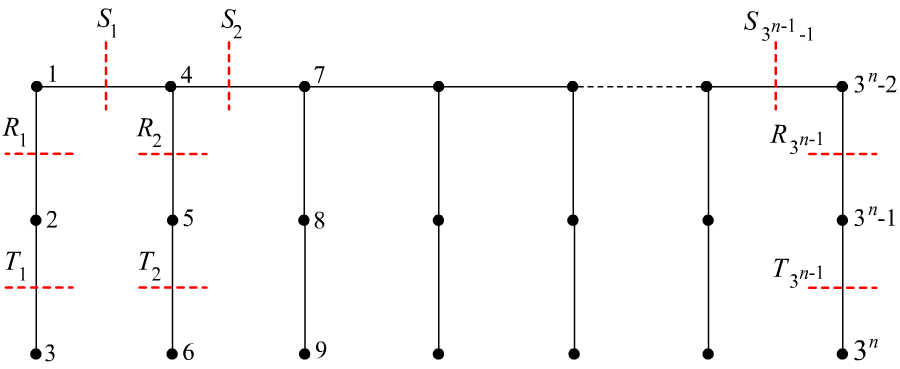}
    \caption{Edge cuts of $F_{3^{n-1},3}$.}
    \label{Figure:4}\vspace*{-3mm}
\end{figure}

\begin{proof}
The removal of edges in $S_{i}$, $1 \leq i \leq 3^{n-1}-1$ disconnects $F_{3^{n-1},3}$ into two components whose inverse images under $lex$ induce lexicographic ordering of the corresponding subgraphs of $Q_n^3$. This implies that the inverse images are maximum subgraphs of $Q_n^3$.
\end{proof}

\begin{lemma}
The edge cuts $R_{j}$ and $T_{k}, \forall \ j,k=1,2,...,3^{n-1}$ of $F_{3^{n-1},3}$ as shown in Figure \ref{Figure:4} induce maximum subgraph in $Q_n^3$.
\end{lemma}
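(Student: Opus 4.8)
The plan is to regard each of the single tree-edges $R_j$ and $T_k$ as a bridge of $F_{3^{n-1},3}$ whose deletion peels off a very small component, and then to show that the preimage of that small component under $lex$ is already a maximum subgraph of $Q_n^3$. Since $Q_n^3$ is $2n$-regular, the Remark following Lemma \ref{congestion lemma} will immediately promote this to \emph{both} components of the cut being maximum subgraphs, so the whole lemma reduces to understanding what the two kinds of cuts isolate.

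First I would pin down the preorder labeling. Running the preorder traversal through the $j$-th $3$-star assigns the three consecutive labels $3(j-1)$, $3(j-1)+1$ and $3(j-1)+2$ to its linked (backbone) leaf, its centre and its pendant leaf, respectively. In this labeling $T_k$ (Figure \ref{Figure:4}) is the edge joining the centre of the $k$-th star to its pendant leaf, so its removal isolates the single vertex carrying the label $3(k-1)+2$. A one-vertex set induces $I_G(1)=0$ edges and is therefore trivially a maximum subgraph, whence the regularity Remark gives the same for the complementary component.

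The informative case is $R_j$, the edge joining the backbone leaf of the $j$-th star to its centre; its removal isolates the pair with the consecutive labels $3(j-1)+1$ and $3(j-1)+2$ (the centre together with its pendant leaf). The key observation is that the three labels $3(j-1)$, $3(j-1)+1$, $3(j-1)+2$ of a single star share all higher base-$3$ digits and differ only in the last digit, so under $lex$ they are mapped onto the three vertices of one of the last-coordinate $3$-cycles of $Q_n^3$; in particular the pair with labels $3(j-1)+1$ and $3(j-1)+2$ spans an edge of that cycle. Hence this pair induces exactly one edge, and since $I_G(2)=1$ it is a maximum subgraph on two vertices; regularity once more transfers this to the complementary component.

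The hard part is therefore not the bookkeeping but the adjacency verification for the pairs cut off by $R_j$: everything rests on the compatibility, guaranteed by Corollary \ref{lex}, between the preorder labeling of $F_{3^{n-1},3}$ and the lexicographic order on $Q_n^3$, which forces each star to occupy a block of three labels and hence realises the centre--pendant pair as a genuine edge of $Q_n^3$. Once this is checked for all $j$ and $k$, Lemma \ref{congestion lemma} applies to every $R_j$ and $T_k$, completing the proof.
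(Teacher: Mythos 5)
Your proposal is correct and follows essentially the same route as the paper, whose entire proof is the one-line observation that each cut severs either a singleton or an edge; you simply supply the detail the paper treats as obvious, namely the base-$3$ digit check that the pair $\{3(j-1)+1,\,3(j-1)+2\}$ cut off by $R_j$ really spans an edge of $Q_n^3$ (and hence induces $I_G(2)=1$ edges). One small note: that adjacency follows directly from the definition of $Q_n^3$ (the labels agree in all but the last coordinate, where they differ by $1 \bmod 3$), so your appeal to Corollary \ref{lex} there is unnecessary --- the corollary concerns initial segments of the lexicographic order, not this pairwise adjacency, which your own digit argument already establishes.
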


\begin{proof}
The result is obvious as one of the components due to the cuts is either a singleton set or an edge.
\end{proof}
By Congestion Lemma and Partition Lemma, we arrive at the following result.

\begin{theorem}
Minimum wirelength is induced by the embedding algorithm of $Q_n^3$ into $F_{3^{n-1},3}$ on $3^n$ vertices.
\end{theorem}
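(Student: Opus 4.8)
The plan is to follow verbatim the template established for the cylinder and the caterpillar: partition the edges of the host $F_{3^{n-1},3}$ into edge cuts, check that each cut meets the three hypotheses of the Congestion Lemma, and then assemble the pieces with the Partition Lemma. The decisive structural observation is that $F_{3^{n-1},3}$ is a tree, so every edge is a bridge and therefore by itself constitutes an edge cut separating the host into exactly two components. Hence the three families $S_i$ $(1\le i\le 3^{n-1}-1)$, $R_j$ and $T_k$ $(1\le j,k\le 3^{n-1})$ of Figure~\ref{Figure:4} already list every edge of the host once, so they form a partition of $E(F_{3^{n-1},3})$ and we may invoke the Partition Lemma with $k=1$.

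First I would dispose of conditions~(1) and~(2) of the Congestion Lemma. Since each cut is a single bridge, for any edge of $Q_n^3$ whose images lie in the same component the unique host path avoids the bridge, while for any edge whose images straddle the two components the host path uses that bridge exactly once; both facts are automatic for a tree host under any one-to-one labeling, so no computation is needed here. Next, condition~(3) — that both preimages are maximum subgraphs — is precisely what the two lemmas immediately preceding the statement supply: the backbone cuts $S_i$ split $Q_n^3$ into two consecutive lexicographic blocks, which are maximum subgraphs by Corollary~\ref{lex}, whereas each $R_j$ and $T_k$ isolates either a singleton vertex or a single edge, a trivially maximum subgraph. Therefore the Congestion Lemma yields that $c_f(S_i)$, $c_f(R_j)$ and $c_f(T_k)$ are each individually minimum.

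Finally I would apply the Partition Lemma with $k=1$ to write
\[
WL_f\big(Q_n^3, F_{3^{n-1},3}\big)=\sum_{i=1}^{3^{n-1}-1} c_f(S_i)+\sum_{j=1}^{3^{n-1}} c_f(R_j)+\sum_{k=1}^{3^{n-1}} c_f(T_k),
\]
and since every summand on the right is simultaneously minimized by the lexicographic/preorder labeling of Embedding Algorithm~C, the sum, and hence the wirelength of this embedding, is minimum. The one genuinely substantive point — and the step I expect to be the main obstacle — is the simultaneity: a single labeling must render the preimage of \emph{every} bridge a maximum subgraph at once. This hinges on the compatibility between the preorder traversal of the firecracker and the lexicographic order on $Q_n^3$, namely that traversing the backbone first makes each backbone cut correspond to an initial segment $lex_{3i}$; once this alignment is granted (it is exactly the content of the first preceding lemma), the remaining star-edge cuts are trivial, and the result follows.
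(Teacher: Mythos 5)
Your proposal is correct and takes essentially the same route as the paper, whose proof of this theorem is just the one-line invocation ``By Congestion Lemma and Partition Lemma'' resting on the two preceding lemmas about the cuts $S_i$, $R_j$, $T_k$. Your write-up simply makes explicit what the paper leaves implicit --- that the three families of single-bridge cuts exhaust $E(F_{3^{n-1},3})$ so the Partition Lemma applies with $k=1$, that conditions (1) and (2) of the Congestion Lemma hold automatically in a tree host, and that condition (3) is exactly the content of the two lemmas --- so it is a faithful (indeed more careful) rendering of the paper's argument.
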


\begin{theorem}
The minimum wirelength of embedding $Q_n^3$ into $F_{3^{n-1},3}$ is given by
\begin{equation*}
WL(Q_n^3, F_{3^{n-1},3})=2\big(3^{n-1}\big)\Big(\big(3^{n-1}-1\big)+\big( 2n-1\big)+n\Big).
\end{equation*}
\end{theorem}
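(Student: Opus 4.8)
The plan is to exploit the fact that $F_{3^{n-1},3}$ is a tree, so that every edge is by itself an edge cut and the path between the images of any two guest vertices is forced. For the embedding $lex$ of Embedding Algorithm C this means $WL_{f}(Q_n^3,F_{3^{n-1},3})=\sum_{e\in E(F)}c_{f}(e)$, and I would organise this sum according to the three families of edges in Figure \ref{Figure:4}: the $3^{n-1}-1$ backbone edges $S_i$, the $3^{n-1}$ edges $R_j$ joining each star-centre to its backbone leaf, and the $3^{n-1}$ pendant edges $T_k$. These three families partition $E(F)$ into edge cuts, and the two lemmas above together with the Congestion Lemma have already shown (in the preceding theorem) that $lex$ realises the minimum wirelength; so the remaining task is purely to evaluate the sum. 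Applying the Partition Lemma with $k=1$ gives
\[
WL_{f}(Q_n^3,F_{3^{n-1},3})=\sum_{i=1}^{3^{n-1}-1}c_{f}(S_i)+\sum_{j=1}^{3^{n-1}}c_{f}(R_j)+\sum_{k=1}^{3^{n-1}}c_{f}(T_k).
\]

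Next, using the $2n$-regularity of $Q_n^3$, the Congestion Lemma reduces each term to $c_{f}(S)=2n\,|V(G_1)|-2\,|E(G_1)|$, where $G_1$ is the smaller side of the cut, which by the two lemmas is a maximum subgraph of $Q_n^3$. For a pendant edge $T_k$ the small side is a single vertex, so $|E(G_1)|=0$ and $c_{f}(T_k)=2n$. For an edge $R_j$ the small side consists of a centre together with its free leaf, two lexicographically consecutive labels that are adjacent in $Q_n^3$, so $|E(G_1)|=1$ and $c_{f}(R_j)=4n-2=2(2n-1)$. Summing over $k$ and $j$ then contributes $2(3^{n-1})(n)$ and $2(3^{n-1})(2n-1)$ respectively, which are exactly the third and second summands of the claimed formula.

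The remaining and main step is the backbone sum $\sum_{i=1}^{3^{n-1}-1}c_{f}(S_i)$. Here the small side of $S_i$ is the union of the first $i$ stars, which under the preorder labelling of $F$ receives precisely the labels $\{0,1,\dots,3i-1\}=lex_{3i}$; hence $c_{f}(S_i)=2n(3i)-2\,I_G(3i)$, using the closed form of Theorem 3.5 for $I_G$. This is identical to the backbone sum already evaluated for the caterpillar, whose value is $2(3^{n-1})(3^{n-1}-1)$, so I would import that evaluation rather than repeat the base-$3$ bookkeeping. Adding the three contributions yields $2(3^{n-1})\big((3^{n-1}-1)+(2n-1)+n\big)$, as required. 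The point demanding the most care — and the likeliest source of error — is confirming that the $S_i$ cuts of the firecracker separate the vertices in exactly the same way as the caterpillar's backbone cuts, i.e.\ that the preorder traversal lists the three vertices of each star consecutively so that cutting $S_i$ leaves precisely $lex_{3i}$ on one side; once this is checked, the caterpillar sum transfers verbatim and the rest is a short substitution.
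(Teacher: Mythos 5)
Your proposal is correct and follows essentially the same route as the paper: the same partition of $E(F_{3^{n-1},3})$ into the cuts $S_i$, $R_j$, $T_k$, the same Congestion/Partition Lemma evaluation giving $c_f(T_k)=2n$, $c_f(R_j)=4n-2$, and the backbone total $2(3^{n-1})(3^{n-1}-1)$ carried over from the caterpillar case. If anything, you supply more detail than the paper does — in particular the check that preorder labels each star's three vertices consecutively so that $S_i$ isolates exactly $lex_{3i}$, and that the two-vertex side of $R_j$ induces an edge of $Q_n^3$ — both of which the paper leaves implicit.
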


\begin{proof}
By Congestion Lemma and Partition Lemma,
\begin{equation*}
    \begin{split}
     WL(Q_n^3, F_{3^{n-1},3})&= \sum_{i=1}^{(3^{n-1})-1} c_{f}(S_{i})+\sum_{j=1}^{3^{n-1}} c_{f}(R_{j})+\sum_{k=1}^{3^{n-1}} c_{f}(T_{k})\\
     &= 2\big(3^{n-1}\times(3^{n-1}-1)\big)+(4n-2)(3^{n-1})+(2n)(3^{n-1})\\
     &=2\big(3^{n-1}\big)\Big(\big(3^{n-1}-1\big)+\big( 2n-1\big)+n\Big).
    \end{split}
\end{equation*}

\vspace*{-8mm}
\end{proof}

\subsection{Wirelength of embedding \boldmath {$Q_n^3$} in banana tree}

\begin{definition}\cite{chen1997operations}
A banana tree $B_{n,k}$ is a graph formed by linking one leaf of each of $n$ copies of a $k$-star graph to a single root vertex that is different from all of the stars.
\end{definition}

 \noindent
\textbf{Embedding Algorithm D:}\\
{\it Input:} The 3-ary $n$-cube, $Q_n^3$ and banana tree, $B_{2,\big\lfloor\frac{3^n}{2}\big\rfloor}$ on $3^n$ vertices.\\
{\it Algorithm:} Label the vertices of 3-ary $n$-cube, $Q_n^3$ and banana tree, $B_{2,\big\lfloor\frac{3^n}{2}\big\rfloor}$ using lexicographic ordering and preorder traversal respectively.\\
{\it Output:} The embedding $lex$ of 3-ary $n$-cube, $Q_n^3$ into banana tree, $B_{2,\big\lfloor\frac{3^n}{2}\big\rfloor}$ on $3^n$ vertices is with minimum wirelength.

\begin{lemma}
The edge cuts $R_{j}$ and $T_{k}$, $\ \forall \ j,k=1,2$ of $B_{2,\big\lfloor\frac{3^n}{2}\big\rfloor}$ as shown in Figure \ref{Figure:5} induce maximum subgraphs in $Q_n^3$.
\end{lemma}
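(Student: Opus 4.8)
The plan is to reduce each of the four cuts $R_1,R_2,T_1,T_2$ to the situation already settled by Corollary~\ref{lex}, and then to use the regularity of $Q_n^3$ to transfer ``maximality'' from one side of a cut to the other. Writing $m=\big\lfloor\tfrac{3^n}{2}\big\rfloor$, recall that $B_{2,m}$ consists of a root $r$, two star centres $c_1,c_2$, two connecting leaves $p_1,p_2$ (with $(r,p_1),(r,p_2)$ the linking edges and $p_i$ adjacent to $c_i$), and the remaining $2(m-2)$ star leaves hanging off $c_1$ and $c_2$. The edges $R_1,R_2,T_1,T_2$ marked in Figure~\ref{Figure:5} are exactly the four edges of the central path $c_1\,p_1\,r\,p_2\,c_2$, so removing any single one of them splits $B_{2,m}$ into precisely two components.

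First I would make the preorder labelling of Embedding Algorithm~D explicit and check that it lists the vertices of one whole star (its centre together with all its leaves) before crossing the spine, so that this star receives the block of labels $\{0,1,\dots,m-1\}$, with the root and the second star receiving the remaining labels. With the labelling fixed in this way, I would verify that cutting any spine edge peels off an \emph{initial} segment of the preorder labels: the four cuts produce, on one side, vertex sets whose $lex$-preimages are exactly $lex_{m-1}$, $lex_{m}$, $lex_{m+1}$ and $lex_{m+2}$ respectively, the precise size depending only on how far along the spine the cut lies.

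Next, by Corollary~\ref{lex} each such lexicographic prefix $lex_s=\{0,1,\dots,s-1\}$ induces a maximum subgraph of $Q_n^3$. Since $Q_n^3$ is $2n$-regular, the remark following Lemma~\ref{congestion lemma} applies: once one side of a cut is a maximum subgraph, the complementary side is automatically a maximum subgraph as well. Hence for every one of $R_1,R_2,T_1,T_2$ both components induce maximum subgraphs of $Q_n^3$, which is exactly the assertion of the lemma.

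The hard part will not be any calculation but ensuring the labelling places one entire star as an initial segment $\{0,\dots,m-1\}$, so that each spine cut yields a genuine prefix rather than an interior block $\{a,\dots,b\}$ with $a>0$, since an interior block need not be a maximum subgraph and Corollary~\ref{lex} would then not apply directly. Concretely, I would run the preorder so that one star centre (say $c_1$) is visited first and its whole star is exhausted before the spine is traversed; one side of every spine cut is then forced to be an initial segment, and the regularity remark disposes of the other side uniformly. Once the labelling is arranged in this way, as in Figure~\ref{Figure:5}, the lemma follows at once.
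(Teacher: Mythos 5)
Your proposal is correct and takes essentially the same route as the paper: the paper's (two-line) proof likewise identifies the two components of each cut $R_1,R_2,T_1,T_2$ with lexicographic segments of the $lex$ labelling and concludes maximality from the optimality of lexicographic order (Corollary~\ref{lex}), exactly as you do via the prefixes $lex_{m-1},lex_{m},lex_{m+1},lex_{m+2}$ with $m=\big\lfloor\frac{3^n}{2}\big\rfloor$ and the regularity remark for the complementary components. Your added care about where the preorder traversal starts (one entire star must receive the initial block of labels, since starting at the root would produce interior blocks that need not be maximum) is a genuine detail the paper leaves implicit in Figure~\ref{Figure:5}, but it sharpens rather than changes the argument.
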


\begin{figure}[!h]
\vspace*{2mm}
    \centering
   \includegraphics[width=0.74\textwidth]{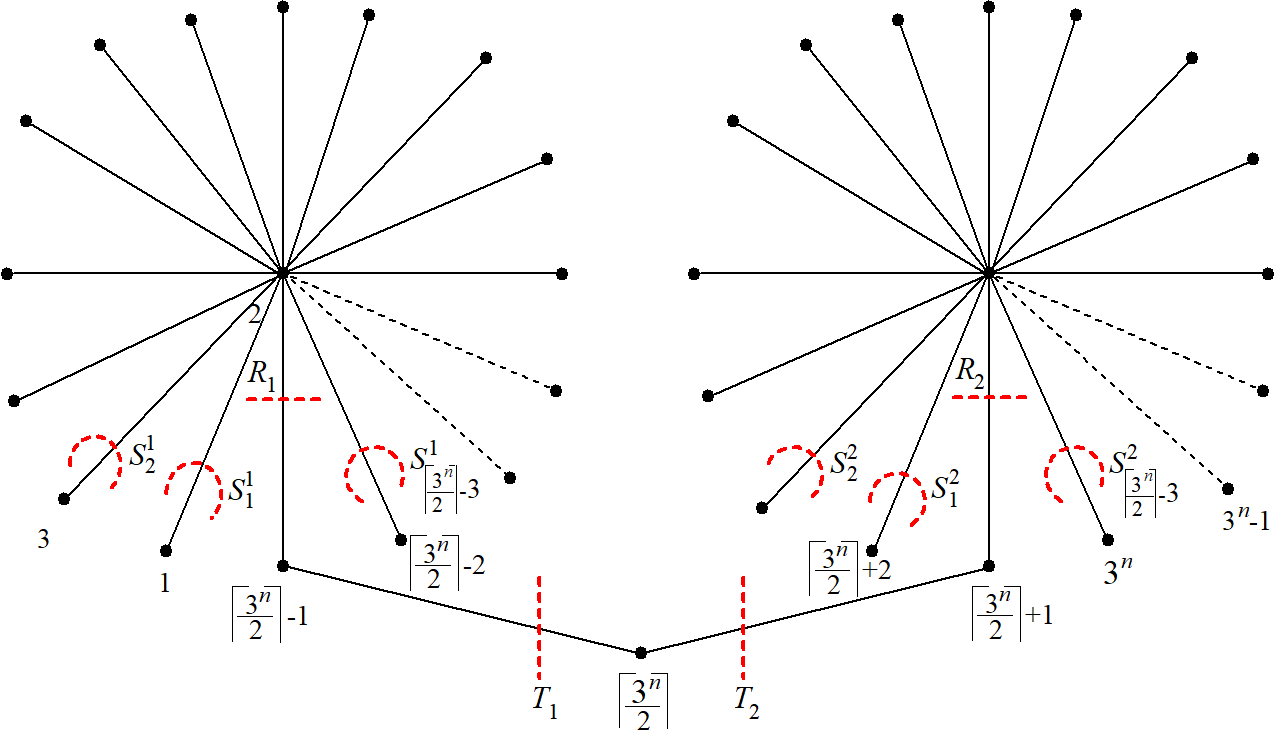}\vspace*{-1mm}
    \caption{Edge cuts of $B_{2,\big\lfloor\frac{3^n}{2}\big\rfloor}$.}
\label{Figure:5}\vspace*{-2mm}
\end{figure}

\begin{proof}
The removal of edges in $R_{j}$ and $T_{k}$, $\ \forall \ j,k=1,2$ disconnects $B_{2,\big\lfloor\frac{3^n}{2}\big\rfloor}$ into two components whose inverse images under $lex$ induce lexicographic ordering of the corresponding subgraphs of $Q_n^3$. This implies that the inverse images are maximum subgraphs of $Q_n^3$.
\end{proof}

\begin{lemma}
The edge cuts $S_{i}^1$ and $S_{i}^2$, $\forall \ i=1,2,...,\big\lceil\frac{3^n}{2}\big\rceil-3$ of $B_{2,\big\lfloor\frac{3^n}{2}\big\rfloor}$ as shown in Figure \ref{Figure:5} induce maximum subgraph in $Q_n^3$.
\end{lemma}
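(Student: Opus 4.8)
The plan is to follow exactly the template used for the caterpillar and firecracker lemmas: identify what each cut $S_i^t$ ($t=1,2$) removes from $B_{2,\lfloor 3^n/2\rfloor}$, describe the two resulting components, and then check that both components pull back under $lex$ to maximum subgraphs of $Q_n^3$. Recall that $B_{2,\lfloor 3^n/2\rfloor}$ consists of a root together with two $\lfloor 3^n/2\rfloor$-stars, each joined to the root through one of its leaves.

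First I would pin down the role of the cuts. The structurally distinguished edges of the tree, namely the two root-to-star links and the two connecting-leaf-to-center links, are precisely the edges handled by $R_1,R_2,T_1,T_2$ in the preceding lemma. This leaves the plain leaf edges of the two star centers as the edges cut by $S_i^1$ and $S_i^2$. Each star center carries $\lfloor 3^n/2\rfloor - 2 = \lceil 3^n/2\rceil - 3$ such leaves, which matches the index range $i=1,\dots,\lceil 3^n/2\rceil-3$ in the statement. The one point I would verify carefully is this bookkeeping, that the cuts $R_j,T_k,S_i^1,S_i^2$ together use each of the $3^n-1$ edges of $B_{2,\lfloor 3^n/2\rfloor}$ exactly once; this is what guarantees that every $S_i^t$ is a single pendant leaf edge rather than a block of edges.

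Granting that, removing $S_i^t$ disconnects the tree so that one component is the single leaf it cuts off. Its inverse image under $lex$ is one vertex of $Q_n^3$, which induces a subgraph with no edges and is therefore trivially a maximum subgraph on one vertex, since $I_{Q_n^3}(1)=0$. For the complementary component I would invoke the $2n$-regularity of $Q_n^3$: deleting any single vertex removes exactly $2n$ edges, so the subgraph on the remaining $3^n-1$ vertices always has $|E(Q_n^3)|-2n$ edges, which is the largest possible, i.e. $I_{Q_n^3}(3^n-1)=|E(Q_n^3)|-2n$. Hence the large component is also a maximum subgraph; this is exactly the content of the Remark following the Congestion Lemma (for regular $G$, maximality of one part forces maximality of the other). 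With both components maximum, condition (iii) of the Congestion Lemma is met and $S_i^1,S_i^2$ induce maximum subgraphs in $Q_n^3$, as claimed.

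I expect no serious obstacle, the argument being the banana-tree analogue of the ``obvious'' firecracker lemma. The only care required is the combinatorial accounting of which edges fall under $S_i^t$ versus the previously treated $R_j,T_k$, so as to confirm that each $S_i^t$ genuinely isolates a single leaf; once that is settled, the regularity of $Q_n^3$ delivers the maximality of the large side immediately.
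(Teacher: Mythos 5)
Your proposal is correct and follows the same route as the paper: the paper's proof is the single observation that each cut $S_i^t$ isolates one pendant leaf, so one component is a singleton and maximality is immediate. Your extra steps---the edge-count bookkeeping against $R_j,T_k$ and the appeal to $2n$-regularity (the Remark after the Congestion Lemma) for the large component---merely make explicit what the paper leaves as obvious.
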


\begin{proof}
The result is obvious as one of the components due to the cuts is a singleton set.
\end{proof}
By Congestion Lemma and Partition Lemma, we arrive at the following result.

\begin{theorem}
Minimum wirelength is induced by the embedding algorithm of $Q_n^3$ into $B_{2,\big\lfloor\frac{3^n}{2}\big\rfloor}$ on $3^n$ vertices.
\end{theorem}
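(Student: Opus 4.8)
The plan is to follow the same Congestion-plus-Partition strategy already used for the cylinder, the caterpillar, and the firecracker, specialised to the fact that the host $B_{2,\big\lfloor\frac{3^n}{2}\big\rfloor}$ is a tree. Since a tree is minimally connected, removing any single edge disconnects it into exactly two components, so every individual edge is already an edge cut in the sense of the Congestion Lemma. Moreover, between any two vertices of a tree there is a unique path; hence for each such single-edge cut conditions~(1) and~(2) of the Congestion Lemma hold automatically, because a guest edge whose endpoints land in the same component never uses the cut edge, while a guest edge straddling the two components uses it exactly once. Thus the only substantive condition left to verify for each cut is condition~(3), the maximality of the induced subgraphs.

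First I would record that the cuts listed in the two preceding lemmas exhaust the edge set of the host. The tree $B_{2,\big\lfloor\frac{3^n}{2}\big\rfloor}$ on $3^n$ vertices has $3^n-1$ edges; the four structural cuts $R_1,R_2,T_1,T_2$ together with the leaf cuts $S_i^1,S_i^2$ for $1\le i\le \big\lceil\frac{3^n}{2}\big\rceil-3$ number $4+2\big(\big\lceil\frac{3^n}{2}\big\rceil-3\big)=4+(3^n-5)=3^n-1$, and they are pairwise distinct. Hence, taking $k=1$ in the $k$-Partition Lemma, the singletons $\{R_1\},\{R_2\},\{T_1\},\{T_2\},\{S_i^1\},\{S_i^2\}$ form a partition of $E\big(B_{2,\big\lfloor\frac{3^n}{2}\big\rfloor}\big)$ into edge cuts.

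Next I would invoke the two lemmas immediately above to discharge condition~(3): for each of $R_j,T_k$ the inverse images of the two components under $lex$ induce lexicographically ordered, hence maximum, subgraphs of $Q_n^3$, while for each $S_i^1,S_i^2$ one component is a singleton (whose inverse image induces the empty, trivially maximum, subgraph) and the other is its complement, which is maximum because $Q_n^3$ is $2n$-regular, by the remark following the Congestion Lemma. Consequently the Congestion Lemma applies to every cut in the list, so each of $c_f(R_j)$, $c_f(T_k)$, $c_f(S_i^1)$, $c_f(S_i^2)$ is minimum for the embedding $lex$ of Algorithm~D. The $1$-Partition Lemma then expresses the total wirelength as
\[
WL_f\big(Q_n^3,B_{2,\big\lfloor\frac{3^n}{2}\big\rfloor}\big)=\sum_{j=1}^{2}c_f(R_j)+\sum_{k=1}^{2}c_f(T_k)+\sum_{i}\big(c_f(S_i^1)+c_f(S_i^2)\big),
\]
and since every summand is individually minimised, the sum is minimised simultaneously; therefore Algorithm~D attains the minimum wirelength $WL\big(Q_n^3,B_{2,\big\lfloor\frac{3^n}{2}\big\rfloor}\big)$.

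The delicate point, and the only place where the particular preorder traversal of Algorithm~D does real work, is condition~(3) for the structural cuts $R_j$ and $T_k$. For the leaf cuts $S_i^1,S_i^2$ maximality is free because one side is a single vertex, but cutting a root-edge $R_j$ or a centre-to-link-leaf edge $T_k$ splits the tree into two genuinely large pieces, and maximality then forces each piece to receive a lexicographically consecutive block of labels. I expect the main obstacle to be verifying that the preorder labelling of $B_{2,\big\lfloor\frac{3^n}{2}\big\rfloor}$ indeed assigns to each $\big\lfloor\frac{3^n}{2}\big\rfloor$-star, and to its complement, an initial segment of the lexicographic order of $Q_n^3$, so that the lemma on $R_j,T_k$ is genuinely applicable; once that alignment is established, the remainder is the routine assembly described above.
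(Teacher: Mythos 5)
Your proposal is correct and follows essentially the same route as the paper: the paper's proof of this theorem is precisely the one-line assembly ``by the Congestion Lemma and Partition Lemma'' of the two preceding lemmas on the cuts $R_j$, $T_k$, $S_i^1$, $S_i^2$, which is what you carry out, only with more care (verifying conditions (1) and (2) via uniqueness of paths in a tree, counting $4+2\big(\big\lceil\frac{3^n}{2}\big\rceil-3\big)=3^n-1$ to confirm the cuts partition $E\big(B_{2,\lfloor 3^n/2\rfloor}\big)$ for the $k=1$ Partition Lemma, and using the regularity remark for the complement of a singleton). The ``delicate point'' you flag about the preorder labelling aligning with lexicographic initial segments is exactly what the paper's preceding lemma on $R_j$, $T_k$ asserts, so invoking it is legitimate here.
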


\begin{theorem}
The minimum wirelength of embedding $Q_n^3$ into $B_{2,\big\lfloor\frac{3^n}{2}\big\rfloor}$ is given by
\begin{equation*}
\begin{split}
 WL\Big(Q_n^3, B_{2,\big\lfloor\frac{3^n}{2}\big\rfloor}\Big)=4n\Big( \Big\lceil\frac{3^n}{2}\Big\rceil-3\Big)+4\Big( \Big\lceil\frac{3^n}{2}\Big\rceil-2\Big)+4\Big( \Big\lceil\frac{3^n}{2}\Big\rceil-1\Big).
\end{split}
\end{equation*}
\end{theorem}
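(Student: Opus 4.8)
The plan is to read off the edge-cut decomposition of $B_{2,\lfloor 3^n/2\rfloor}$ exhibited in Figure~\ref{Figure:5}, use the two preceding lemmas to certify that each cut meets the hypotheses of the Congestion Lemma, and then assemble the total by the Partition Lemma. First I would record that $B_{2,\lfloor 3^n/2\rfloor}$ has exactly $3^n$ vertices (two $\lfloor 3^n/2\rfloor$-stars on $(3^n-1)/2$ vertices each, plus one root), so the preorder labelling of Embedding Algorithm D is a genuine bijection onto $Q_n^3$. Writing $m=\lceil 3^n/2\rceil=(3^n+1)/2$, the $3^n-1$ tree edges split into the $2(m-3)$ leaf-isolating cuts $S_i^1,S_i^2$ and the four remaining cuts $R_1,R_2,T_1,T_2$ (the two root edges and the two star-centre edges). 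This is precisely a $1$-partition of $E(B_{2,\lfloor 3^n/2\rfloor})$, so the Partition Lemma gives $WL=\sum_i c_f(S_i^1)+\sum_i c_f(S_i^2)+\sum_j c_f(R_j)+\sum_k c_f(T_k)$.

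Since $Q_n^3$ is $2n$-regular, the Congestion Lemma evaluates each term as $c_f=2n|A|-2I_G(|A|)$, where $A$ is the smaller component of the cut (a maximum subgraph by the two lemmas just proved). The $2(m-3)$ cuts $S_i^1,S_i^2$ each isolate a single vertex, so $|A|=1$ and $c_f=2n$, contributing $4n(m-3)$. The crux is the four large cuts: removing a root edge leaves a subtree on $m-1=(3^n-1)/2$ vertices, while removing a star-centre edge leaves a subtree on $m-2=(3^n-3)/2$ vertices. Verifying these two counts directly from the tree is the only genuinely case-specific bookkeeping.

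The step I expect to carry the real content is the evaluation of $I_G$ at these two half-sizes. Here I would exploit the base-$3$ ``repunit'' shape of the arguments: $(3^n-1)/2=\sum_{i=0}^{n-1}3^i$ and $(3^n-3)/2=\sum_{i=1}^{n-1}3^i$, so in the closed form $I_G(k)=\sum_i(k_i+(i-1))3^{k_i}$ established above the exponents run as $k_i=n-i$ and every coefficient collapses to the constant $k_i+(i-1)=n-1$. This yields $I_G\!\big((3^n-1)/2\big)=(n-1)(m-1)$ and $I_G\!\big((3^n-3)/2\big)=(n-1)(m-2)$, whence the factor $n$ cancels: the two root-edge cuts each have $c_f=2(m-1)$ and the two centre-edge cuts each have $c_f=2(m-2)$. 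Substituting into the Partition-Lemma sum gives $WL=4n(m-3)+2\cdot 2(m-1)+2\cdot 2(m-2)=4n(m-3)+4(m-1)+4(m-2)$, which is the claimed value. The main obstacle is therefore not the summation but recognising the repunit structure that forces the isoperimetric values to be linear in the set size; once the four component sizes are pinned down, the rest is routine.
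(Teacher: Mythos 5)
Your proposal is correct and takes essentially the same route as the paper: the identical partition of $E\big(B_{2,\lfloor 3^n/2\rfloor}\big)$ into the $2\big(\lceil\frac{3^n}{2}\rceil-3\big)$ leaf-isolating cuts $S_i^1,S_i^2$ and the four cuts $R_j,T_k$, each certified maximum by the two preceding lemmas and summed via the Congestion and Partition Lemmas. The only difference is that you make explicit the bookkeeping the paper's two-line computation suppresses --- the component sizes $m-1$ and $m-2$ with $m=\lceil\frac{3^n}{2}\rceil$, and the repunit evaluations $I_G\big(\frac{3^n-1}{2}\big)=(n-1)\frac{3^n-1}{2}$ and $I_G\big(\frac{3^n-3}{2}\big)=(n-1)\frac{3^n-3}{2}$ from the closed form for $I_G(k)$, which force $c_f(R_j)=2(m-1)$ and $c_f(T_k)=2(m-2)$ --- and these check out exactly against the stated formula.
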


\begin{proof}
By Congestion Lemma and Partition Lemma,
\begin{equation*}
\begin{split}
 WL\Big(Q_n^3, B_{2,\big\lfloor\frac{3^n}{2}\big\rfloor}\Big)&=\sum_{i=1}^{\big\lceil\frac{3^n}{2}\big\rceil-3}c_{f}(S_{i}^1)+\sum_{i=1}^{\big\lceil\frac{3^n}{2}\big\rceil-3}c_{f}(S_{i}^2)+\sum_{j=1}^{2} c_{f}(R_{j})+\sum_{k=1}^{2} c_{f}(T_{k})\\
 &=4n\Big( \Big\lceil\frac{3^n}{2}\Big\rceil-3\Big)+4\Big( \Big\lceil\frac{3^n}{2}\Big\rceil-2\Big)+4\Big( \Big\lceil\frac{3^n}{2}\Big\rceil-1\Big).
\end{split}
\end{equation*}

\vspace*{-8mm}
\end{proof}

\section{Conclusion}
The optimal wirelength of 3-ary $n$-cube into certain cylinders and trees such as caterpillars,
firecracker graphs and banana trees are determined in this paper.

\subsection*{Acknowledgments}
We extend our thanks to Dr. Indra Rajasingh,  Adjunct Professor, Saveetha School
of Engineering, Saveetha Institute of Medical and Technical Sciences, for her insightful suggestions. Also, we would like to express our gratitude to the anonymous reviewers for their thorough remarks, which allowed us to greatly enhance the paper.


\end{document}